\documentclass[11pt]{article}
\usepackage{authblk}
\usepackage{amsmath}
\usepackage{amsfonts}
\usepackage{amssymb}
\usepackage{amsthm}
\usepackage{fullpage}
\usepackage{graphicx}
\usepackage{enumerate}
\def\be{\begin{equation}}
\def\ee{\end{equation}}
\def\bea{\begin{eqnarray}}
\def\eea{\end{eqnarray}}
\def\bma{\begin{mathletters}}
\def\ema{\end{mathletters}}

\def\0{\overline{0}}
\def\tr{\mbox{tr}}

\def\q0{\underline{0}}

\def\H{{\cal H}}

\def\id{{\mathbb I}}

\def\H{{\cal H}}

\def\tr{\mbox{tr}}
\def\one{\leavevmode\hbox{\small1\normalsize\kern-.33em1}}

\def\bra#1{\langle#1|} \def\ket#1{|#1\rangle}
\def\braket#1#2{\langle#1|#2\rangle}

\def\proj#1{\ket{#1}\!\bra{#1}}

\newtheorem{theo}{Theorem}

\newtheorem{remark}{Remark}
\newtheorem{defin}[theo]{Definition}

\newtheorem{lemma}[theo]{Lemma}

\def\id{{\mathbb I}}
\def\tr{\mbox{tr}}

\begin{document}
\title{Almost quantum correlations}
\author{Miguel Navascu\'es$^{1,2}$, Yelena Guryanova$^1$, Matty J. Hoban$^3$ and Antonio Ac\'in$^{3,4}$}
\affil{\small $^1$H.H. Wills Physics Laboratory, University of Bristol, Tyndall Avenue,\\ Bristol, BS8 1TL, United Kingdom\\$^2$Universitat Aut\`onoma de Barcelona, 08193 Bellaterra (Barcelona), Spain\\$^3$ICFO-Institut de Ciencies Fotoniques, Av. Carl Friedrich Gauss 3,\\ E-08860 Castelldefels, Barcelona, Spain\\$^4$ICREA--Institucio Catalana de Recerca i Estudis Avan\c{c}ats, E--08010 Barcelona, Spain \normalsize}
\date{}
\maketitle

\abstract{There have been a number of attempts to derive the set of quantum non-local correlations from reasonable physical principles. Here we introduce $\tilde{Q}$, a set of multipartite supra-quantum correlations that has appeared under different names in fields as diverse as graph theory, quantum gravity and quantum information science. We argue that $\tilde{Q}$ may correspond to the set of correlations of a reasonable physical theory, in which case the research program to reconstruct quantum theory from device-independent principles is met with strong obstacles. In support of this conjecture, we prove that $\tilde{Q}$ is closed under classical operations and satisfies the physical principles of Non-Trivial Communication Complexity, No Advantage for Nonlocal Computation, Macroscopic Locality and Local Orthogonality. We also review numerical evidence that almost quantum correlations satisfy Information Causality.}

\section{Introduction}

The validity of quantum mechanics in the microscopic and mesoscopic realm has been established up to incredible precision. However, despite the successes of the standard model of particle physics, gravity still does not quite fit into the picture. This fact, together with the complete absence of physical intuition in the historical formulation of quantum mechanics, has motivated a number of works where the Hilbert space structure of quantum theory was derived from first principles \cite{hardy,axioms, lluis, chiribella, hardy2}. This was done, not only with the intention to legitimize quantum theory, but also with the hope that some suitable relaxation of such principles would lead to interesting generalizations of quantum physics \cite{hardy}. 

Parallel to these efforts, there have been several attempts to recover the limits of quantum nonlocality from physical principles which can be formulated in a black-box scenario, with no reference to unobservable elements of the structure of the underlying physical theory. Axioms such as the Non-signalling Principle \cite{popescu}, Non-trivial Communication Complexity \cite{brassard}, No Advantage for Nonlocal Computation \cite{linden}, Information Causality \cite{marcin}, Macroscopic Locality \cite{mac_loc} and Local Orthogonality \cite{loc_orth} have been proposed to hold in all reasonable physical theories, and their associated constraints on the set of accessible nonlocal correlations have been studied thoroughly. To this day, however, it is an open question whether all these principles, or a subset of them, suffice to derive the set of quantum correlations.

In this paper we present an outer approximation to the set of quantum correlations which we term `almost quantum'. This approximation has appeared before in the scientific literature under different names and in completely different contexts, such as quantum information science \cite{NPA}, graph theory (see \cite{graph1,graph2} for references) and quantum gravity \cite{quant_grav}. Inspired by these surprising connections, we will present a series of results that support the conjecture that the set of almost quantum correlations does actually emerge from a reasonable physical theory. Firstly, we will prove that separate parties sharing a number of almost quantum boxes cannot, via post-selections and wirings, build new boxes outside the set: this hints that the almost quantum set actually corresponds to the set of nonlocal correlations of a specific model of reality. Secondly, we will argue that such a model, if it exists, must be physically compelling, since: a) we have numerical evidence that almost quantum correlations satisfy the principle of Information Causality; and b) we can prove that almost quantum correlations satisfy the rest of the physical principles stated above. Our purpose is two-fold: on one hand, we want to motivate the study of the almost quantum set in the hope that it inspires physical theories alternative to quantum mechanics. On the other hand, we want to argue that the program to recover quantum nonlocality via `reasonable' information-theoretic principles is fundamentally restricted, because our physical intuition about quantum nonlocality seems not to go beyond the almost quantum approximation.

It is worth noting at the outset that all principles except Local Orthogonality are defined in the two-party setting. It was shown in Ref. \cite{rodrigo} that truly multipartite principles are needed to retrieve quantum nonlocality. Here in our work, we treat principles defined in the bipartite setting \textit{solely} in the bipartite setting thus the argument of Ref. \cite{rodrigo} does not apply. On the other hand, in our work the one principle defined the multipartite setting, Local Orthogonality, is shown to be unable to retrieve quantum correlations. Our results in that case again do not rely on previous methods \cite{rodrigo}.

This paper is organized as follows: first, we will define the set of almost quantum correlations, provide a semidefinite programming characterization \cite{sdp} and comment on alternative definitions of it appearing in past literature. We will then show, in Section \ref{noquantum}, that the almost quantum set contains the quantum set strictly: indeed, even in the simplest nonlocality scenario, one can already find almost quantum distributions which cannot be approximated by quantum mechanical systems. In section \ref{closed_wirings}, we will prove that the set of almost quantum boxes is closed under classical operations, and, as such, may correspond to the nonlocal limits of a consistent physical theory. In sections \ref{ntcc}, \ref{ml}, \ref{nanc}, \ref{lo} we will rely on this result to prove that the set satisfies the principles of Non-trivial Communication Complexity \cite{brassard}, Macroscopic Locality \cite{mac_loc}, No Advantage for Nonlocal Computation \cite{linden} and Local Orthogonality \cite{loc_orth}. In Section \ref{ic} we will also discuss numerical evidence that suggests that almost quantum correlations also satisfy Information Causality \cite{marcin}. Finally, we will present our conclusions.

\section{$\tilde{Q}$, the almost quantum set: definition and SDP characterization}
\label{sec_defin}

Consider a scenario where $n$ parties conduct measurements $\bar{x}=(x_1,...,x_n)$ on their respective subsystems, obtaining outcomes $\bar{a}=(a_1,...,a_n)\in \{0,...,d-1\}^n$. Given $m\leq n$ parties, the pair $(\bar{a}|\bar{x})$, with the components of $\bar{a},\bar{x}$ labeled by each of the $m$ parties will be called an \emph{event}. For example, let $n=3$. Then, the event $(a_1,a_3|x_1,x_3)$ represents the physical situation in which parties 1 and 3 have measured $x_1,x_3$, and obtained, respectively, the outcomes $a_1,a_3$. Following \cite{loc_orth}, we say that two events $e\equiv (\bar{a}|\bar{x})$, $e'\equiv (\bar{a}'|\bar{x}')$ are \emph{locally orthogonal} (represented $e\perp e'$) if there is a common party $k$ such that $x_k=x_k'$, and $a_k\not=a_k'$.

\begin{defin}
\label{Q1AB}
Let $P(a_1,...,a_n|x_1,...,x_n)$ be an $n$-partite non-signalling distribution. We say that $P(a_1,...,a_n|x_1,...,x_n)$ is \emph{almost quantum} iff there exist a Hilbert space $\H$, a normalized state $\ket{\phi}\in \H$ and projector operators $\{E^{a,x}_k\}\subset B(\H)$ with the properties:

\begin{enumerate}[(i)]
\item 
\label{complete}
$\sum_aE_k^{a,x}=\id$, for all $x$, $k$.

\item
\label{permut} 
$E^{a_1,x_1}_1...E^{a_n,x_n}_n\ket{\phi}=E^{a_{\pi(1)},x_{\pi(1)}}_{\pi(1)}...E^{a_{\pi(n)},x_{\pi(n)}}_{\pi(n)}\ket{\phi}$, where $\pi\in S_n$ is an arbitrary permutation of the parties $\{1,...,n\}$.

\item
\label{prob}
$P(a_1,...,a_n|x_1,...,x_n)=\bra{\phi}\prod_{k=1}^nE^{a_k,x_k}_k\ket{\phi}$.

\end{enumerate}

\end{defin}

Any set of projectors $\{E^{a,x}_k\}$ and quantum state $\ket{\phi}$ satisfying the above conditions will be called an \emph{almost quantum representation} for $P(a_1,...,a_n|x_1,...,x_n)$. The set of all almost quantum distributions will be denoted by $\tilde{Q}$.

This definition must be contrasted with that of the set $Q$ of quantum correlations, namely:

\begin{defin}
\label{quantum}
Let $P(a_1,...,a_n|x_1,...,x_n)$ be an $n$-partite non-signalling distribution. We say that $P(a_1,...,a_n|x_1,...,x_n)$ is a quantum distribution iff there exist Hilbert spaces $\{\H_k\}_{k=1}^n$, a normalized state $\ket{\phi}\in \bigotimes_{k=1}^n\H_k$ and projector operators $\{E^{a,x}_k\}\subset B(\H_k)$ with the properties:

\begin{enumerate}[(i)]
\item 
$\sum_aE_k^{a,x}=\id_k$, for all $x$, $k$.

\item
$P(a_1,...,a_n|x_1,...,x_n)=\bra{\phi}\bigotimes_{k=1}^n E^{a_k,x_k}_k\ket{\phi}$.

\end{enumerate}

\end{defin}

Given a distribution $P(a_1,...,a_n|x_1,...,x_n)$ satisfying the conditions of Definition \ref{quantum}, it is immediate that $\{\tilde{E}^{a_k,x_k}_k\}$, $\ket{\phi}$, with $\tilde{E}^{a_k,x_k}_k\equiv\id^{\otimes k-1}\otimes E^{a_k,x_k}_k\otimes \id^{\otimes n-k}$, constitutes an almost quantum representation for $P(a_1,...,a_n|x_1,...,x_n)$. In other words, all quantum distributions are almost quantum, or $Q\subset \tilde{Q}$. In the next section we will see that this inclusion relation is strict.

The following lemma provides a semidefinite programming characterization of the set $\tilde{Q}$ of all almost quantum distributions.

\begin{lemma}
\label{defin_alter}
Let $P(a_1,...,a_n|x_1,...,x_n)$ be a non-signalling $n$-partite distribution. $P(a_1,...,a_n|x_1,...,x_n) \in \tilde{Q}$ iff, for any event $(\bar{a}|\bar{x})$ with $a_k\not=0$ for all parties $k$ involved, there exists a vector $\ket{\bar{a},\bar{x}}\in \H$ with the properties

\begin{enumerate}[(a)]
\item
\label{local_orth}
$\braket{\bar{a}',\bar{x}'}{\bar{a},\bar{x}}=0$, if $(\bar{a}'|\bar{x}')\perp (\bar{a}|\bar{x})$.
\item \label{probs} 
$P(\bar{a}|\bar{x})=\braket{\phi}{\bar{a},\bar{x}}$, where $\ket{\phi}$ is the (normalized) vector corresponding to the null event, i.e., none of the parties measures.

\item
\label{consist}
$\braket{\bar{a},\bar{a}',\bar{x},\bar{x}'}{\bar{a},\bar{a}'',\bar{x},\bar{x}''}=\braket{\bar{a}',\bar{x}'}{\bar{a},\bar{a}'',\bar{x},\bar{x}''}=\braket{\bar{a},\bar{a}',\bar{x},\bar{x}'}{\bar{a}'',\bar{x}''}$, where $(\bar{a}|\bar{x})$ is any event not involving the measuring parties in the events $(\bar{a}'|\bar{x}')$ and $(\bar{a}''|\bar{x}'')$.

\end{enumerate}

\end{lemma}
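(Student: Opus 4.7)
The plan is to prove each direction of the iff separately, with the technical work concentrated on the reverse direction.

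\textbf{Forward direction.} Assume $P\in\tilde{Q}$ with representation $(\H, \ket{\phi}, \{E_k^{a,x}\})$ and set
$$\ket{\bar{a}, \bar{x}} := \prod_{k} E_k^{a_k, x_k} \ket{\phi},$$
the product running over the parties in the event; this is well-defined independently of ordering by property (ii) of Def.~\ref{Q1AB}, and the null-event vector is $\ket{\phi}$ itself. Condition (a) then follows by using (ii) to place the two party-$k$ projectors adjacent in $\braket{\bar{a}',\bar{x}'}{\bar{a},\bar{x}}$ and invoking $E_k^{a,x}E_k^{a',x}=0$ for $a\neq a'$, a consequence of $\sum_a E_k^{a,x}=\id$ combined with each $E_k^{a,x}$ being a projector. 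Condition (b) follows from (iii) of Def.~\ref{Q1AB} after marginalizing uninvolved parties using completeness and permutation invariance. Condition (c) is obtained by bringing the common block $e_0$ adjacent via (ii) and collapsing $(E_k^{a,x})^2=E_k^{a,x}$ to transfer it across the inner product.

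\textbf{Reverse direction.} Given vectors satisfying (a)--(c), I would construct an almost quantum representation in three steps. First, extend the family to include zero outcomes via
$$\ket{\bar{a}, 0_k, \bar{x}, x_k} := \ket{\bar{a}, \bar{x}} - \sum_{a=1}^{d-1} \ket{\bar{a}, a_k, \bar{x}, x_k},$$
imitating the relation $E_k^{0,x}=\id-\sum_{a\neq 0}E_k^{a,x}$. Second, let $\H$ be the closed span of all (extended) event vectors and, for each party $k$ and measurement $x$, define $E_k^{a,x}$ on this span by $E_k^{a,x}\ket{\bar{a}',\bar{x}'}:=\ket{\bar{a}', a_k, \bar{x}', x_k}$ when party $k$ is absent from the event, and by $\delta_{a,a_k'}\ket{\bar{a}',\bar{x}'}$ when party $k$ appears with measurement $x_k'=x$. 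Well-definedness on linear combinations is the crucial check: if $\sum_i c_i \ket{\bar{a}_i,\bar{x}_i}=0$, one verifies $\sum_i c_i \ket{\bar{a}_i, a_k, \bar{x}_i, x_k}=0$ by taking inner products with arbitrary test event vectors and pushing the appended factor across the bra-ket using condition (c). Third, extend each $E_k^{a,x}$ to a genuine projector on the full Hilbert space in the remaining case (party $k$ already present with a different measurement) via a dilation. Conditions (i)--(iii) of Def.~\ref{Q1AB} then follow: (i) from the Step-1 extension, (ii) by induction on the number of factors (since $E_k^{a,x}$ acts by appending, $\prod_k E_k^{a_k,x_k}\ket{\phi}=\ket{\bar{a},\bar{x}}$ is ordering-independent), and (iii) directly from (b).

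\textbf{Main obstacle.} The hardest step is extending each $E_k^{a,x}$ to an honest projector on the full $\H$ in the remaining case. The cleanest formalization goes through GNS applied to the free $*$-algebra generated by formal symbols $E_k^{a,x}$ modulo the projector and PVM relations, with the state defined by $\omega(A^*B):=\braket{A\phi}{B\phi}$ on distinct-party monomials via the given vectors; positivity is automatic since these inner products come from a Hilbert space, and conditions (a) and (c) guarantee compatibility with the algebraic relations, so the GNS representation yields a Hilbert space and genuine projectors realizing $P$ as an almost quantum distribution.
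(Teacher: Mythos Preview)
Your forward direction is essentially the paper's: define $\ket{\bar a,\bar x}=\prod_k E_k^{a_k,x_k}\ket{\phi}$ and read off (a)--(c) from Definition~\ref{Q1AB}.

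For the reverse direction you take a genuinely different, and harder, route than the paper. The paper never tries to define $E_k^{a,x}$ by its action on event vectors, nor does it invoke any dilation or GNS machinery. Instead it sets
\[
V_k^{a,x}\;:=\;\mathrm{span}\bigl\{\ket{a_k,\bar a',x_k,\bar x'}\bigr\},\qquad \tilde E_k^{a,x}\;:=\;\mbox{orthogonal projector onto }V_k^{a,x},
\]
which are automatically honest projectors on $\H$; condition (a) gives $V_k^{a,x}\perp V_k^{a',x}$ for $a\neq a'$, and condition (c) is exactly what is needed to check $\tilde E_k^{a_k,x_k}\ket{\bar a,\bar x}=\ket{a_k,\bar a,x_k,\bar x}$ (the component of $\ket{\bar a,\bar x}-\ket{a_k,\bar a,x_k,\bar x}$ along $V_k^{a_k,x_k}$ vanishes). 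An induction then gives $\prod_k\tilde E_k^{a_k,x_k}\ket{\phi}=\ket{\bar a,\bar x}$ in any order, and $\tilde E_k^{0,x}:=\id-\sum_{a\neq 0}\tilde E_k^{a,x}$ finishes. This bypasses entirely what you call the ``main obstacle''.

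Your GNS sketch, as written, has a gap. In the free $*$-algebra generated by the $E_k^{a,x}$ modulo only the projector/PVM relations, a general word can repeat a party with \emph{different} measurement labels, e.g.\ $E_1^{a,x}E_2^{b,y}E_1^{a',x'}$. The given vectors only furnish $\omega(A^\ast B)$ when $A\ket{\phi}$ and $B\ket{\phi}$ are distinct-party monomials; they do not determine $\omega$ on the full algebra, and positivity on that subspace does not propagate to an extension for free (the distinct-party monomials do not form a subalgebra). So ``positivity is automatic'' is not enough to run GNS here. Your Step~2/Step~3 path could in principle be salvaged, but the well-definedness check and the passage to genuine projectors are precisely the content of the paper's subspace construction, which gives these properties for free.
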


Note that any set of complex vectors subject to restrictions (\ref{local_orth}), (\ref{probs}), (\ref{consist}) implies the existence of real vectors subject to the same constraints. It follows that, in the above semidefinite program, all free variables can be taken real.

\begin{proof}
The right implication follows by defining $\ket{\bar{a},\bar{x}}\equiv \prod_kE^{a_k,x_k}\ket{\phi}$. 

Let us go for the left implication: consider the subspaces $V^{a,x}_k\equiv \mbox{span}\{\ket{a_k,\bar{a}',x_k,\bar{x}'}\}$. From condition (\ref{local_orth}), we have that $V^{a,x}_k\perp V^{a',x}_k$, for $a\not=a'$. It follows that the projectors $\tilde{E}^{a,x}_k\equiv \mbox{proj}(V^{a,x}_k)$ satisfy

\be
\tilde{E}^{a,x}_k\tilde{E}^{a',x}_k=\tilde{E}^{a,x}_k\delta_{a,a'}.
\ee

Now, the action of $\tilde{E}^{a_k,x_k}_k$ over the vector $\ket{\phi}$ is given by:

\be
\tilde{E}^{a_k,x_k}_k\ket{\phi}=\tilde{E}^{a_k,x_k}_k\ket{a_k,x_k}+\tilde{E}^{a_k,x_k}_k(\ket{\phi}-\ket{a_k,x_k})=\ket{a_k,x_k},
\ee

\noindent where we have used that $\ket{a_k,x_k}\in V^{a_k,x_k}_k$ in order to conclude $\tilde{E}^{a_k,x_k}_k\ket{a_k,x_k}=\ket{a_k,x_k}$. The second term $\tilde{E}^{a_k,x_k}_k(\ket{\phi}-\ket{a_k,x_k})$ vanishes, since

\be
\braket{a_k,\bar{a}',x_k,\bar{x}'}{\phi}=\braket{a_k,\bar{a}',x_k,\bar{x}'}{a_k,x_k}
\ee

\noindent by virtue of relation (\ref{consist}). Note that, also by condition (\ref{consist}), the last equality holds when we replace $\ket{\phi}$, $\ket{a_k,x_k}$ by $\ket{\bar{a},\bar{x}}$, $\ket{a_k,\bar{a},x_k,\bar{x}}$, where $(\bar{a}|\bar{x})$ is any event where party $k$ does not intervene. It follows that $\tilde{E}^{a_k,x_k}_k\ket{\bar{a},\bar{x}}=\ket{a_k,\bar{a},x_k,\bar{x}}$. By induction, we thus arrive at

\be
\prod_{k}\tilde{E}^{a_k,x_k}_k\ket{\phi}=\ket{\bar{a},\bar{x}},
\label{funda}
\ee

\noindent for $a_k\not=0$, where the product is taken in whatever order. Finally, define

\be
\tilde{E}^{0,x_k}_k\equiv \id-\sum_{a\not=0}\tilde{E}^{a,x_k}_k.
\ee

From eq. (\ref{funda}) and relation (\ref{probs}), it thus follows that the state $\ket{\phi}$ and the operators $\{\tilde{E}^{a,x}_k\}$ satisfy the conditions of definition \ref{Q1AB}.

\end{proof}

\begin{remark}
\label{certificate}
By the Gram decomposition \cite{horn}, the existence of a set of vectors satisfying the conditions of Lemma \ref{defin_alter} is equivalent to the existence of a positive semidefinite matrix $\Gamma$, with rows and columns labeled by events $(\bar{a}|\bar{x})$ with $a_k\not=0$ for all parties involved, and such that

\begin{enumerate}[(a)]
\item
$\Gamma_{(\bar{a}'|\bar{x}'),(\bar{a}|\bar{x})}=0$, if $(\bar{a}'|\bar{x}')\perp (\bar{a}|\bar{x})$.
\item
$\Gamma_{\phi,\phi}=1$.
\item 
$P(\bar{a}|\bar{x})=\Gamma_{\phi,(\bar{a}|\bar{x})}$.

\item
$\Gamma_{(\bar{a},\bar{a}'|\bar{x},\bar{x}'),(\bar{a},\bar{a}''|\bar{x},\bar{x}'')}=\Gamma_{(\bar{a}'|\bar{x}'),(\bar{a},\bar{a}''|\bar{x},\bar{x}'')}=\Gamma_{(\bar{a},\bar{a}'|\bar{x},\bar{x}'),(\bar{a}''|\bar{x}'')}$, where $(\bar{a}|\bar{x})$ is any event not involving the measuring parties in the events $(\bar{a}'|\bar{x}')$ and $(\bar{a}''|\bar{x}'')$.

\end{enumerate}

\noindent We will call any such matrix an \emph{almost quantum certificate} for $P(\bar{a}|\bar{x})$.
\end{remark}

In the bipartite case, Lemma \ref{defin_alter} allows us to identify $\tilde{Q}$ with the set $Q^{1+AB}$, defined in \cite{NPA2} as an approximation to the set of quantum correlations for applications in quantum information theory. $Q^{1+AB}$ can also be interpreted, by Definition \ref{Q1AB}, as the set of bipartite probability distributions admitting a strongly positive decoherence functional \cite{quant_grav}, see \cite{joe} for a proof\footnote{The decoherence functional approach is a sum-of-histories-based relaxation of quantum theory introduced in \cite{hartle1} as an attempt to handle fluctuating space-time geometries.}. Finally, in any non-locality scenario defined by the hypergraph $H$, the set $\tilde{Q}$ can be identified with the set of probabilistic models $p$ such that $\vartheta(\mbox{Ort}(H),p)=1$ \cite{graph1,graph2}. Here the $\vartheta(\bullet,p)$ denotes the $p$-weighted Lov\'asz number of the graph $\bullet$; and $\mbox{Ort}(H)$, the non-orthogonality graph of $H$, see \cite{graph2} for the corresponding definitions. As we can see, the almost quantum set has been independently derived in a variety of contexts. It is hence not unreasonable to presume that there is something `natural' about this set.

\section{$\tilde{Q}$ is supra-quantum}
\label{noquantum}

There already exists in the literature numerical evidence that $\tilde{Q}$ is supra-quantum \cite{NPA2,pal} but now we give an analytical proof. To prove that $\tilde{Q}\not =Q$, it is enough to consider a nonlocality scenario with two inputs $a_k \in \{0,1\}$, two outputs $x_k \in \{0,1\}$ and two parties ($k=1, 2$). Due to normalization and no-signalling constraints, any probability distribution in this scenario can be written as an 8-dimensional vector

\begin{align}
\overline{p}\equiv \left( P_1(1 |0) , P_1(1 |1) ,P_2(1 |0) ,P_2(1 |1) , P(1,1|0,0), P(1,1|1,0), P(1,1|0,1), P(1,1|1,1) \right),
\end{align}

\noindent where $P_k$ denotes the marginal probability distribution of party $k$.

Consider the Bell inequality $B(\overline{p}) \equiv \overline{b}\cdot \overline{p}$, with
\begin{align}\label{bell}
\overline{b} =  \left(-\frac{30}{31},\frac{167}{9},\frac{167}{9},-\frac{30}{31},-\frac{174}{11},-\frac{244}{23},\frac{74}{11},-\frac{174}{11}\right).
\end{align}

\noindent To estimate its minimal quantum value, we use the fact that all extreme distributions in the two inputs/two outputs scenario can be obtained by measuring a two-qubit system with the following projectors \cite{extreme_2222}:
\begin{eqnarray}
&\tilde{E}_1^{a=1,x=0} = \proj{1}\otimes \id_2, \tilde{E}_1^{a=1,x=1} = \proj{\psi_1}\otimes \id_2,\nonumber\\
&\tilde{E}_2^{a=1,x=0} = \id_2\otimes \proj{1}, \tilde{E}_2^{a=1,x=1} = \id_2\otimes\ket{\psi_2}\bra{\psi_2},
\label{general_meas}
\end{eqnarray}
\noindent where $\ket{\psi_{1,2}} = \cos(\theta_{1,2})\ket{0} + \sin(\theta_{1, 2})\ket{1}$. 

For simplicity, denote $\tilde{E}_k^{a=1,x}$ as $\tilde{E}_k^{x}$ and define the Bell operator 

\be
M(\theta_1,\theta_2)\equiv \overline{b}\cdot \left(E_1^{0},E_1^{1},E_2^{0},E_2^{1}, E_1^0 E_2^0, E_1^1 E_2^0, E_1^0 E_2^1, E_1^1 E_2^1, \right).
\ee

\noindent Using the determinant criterion \cite{horn}, it can be verified that

\be
M(\theta_1,\theta_2)+\id_4>0,
\ee

\noindent for all $\theta_1,\theta_2\in [0,2\pi)$; it follows that $B(\overline{p})>-1$ for all $\overline{p}\in Q$.

Now, consider the distribution
\begin{align}
\overline{p}_{\tilde{Q}} = \left(
\begin{array}{ccccccccc}
  \frac{9}{20} & \frac{2}{11} & \frac{2}{11} & \frac{9}{20} & \frac{22}{125} & \frac{\sqrt{2}}{9} & \frac{37}{700} & \frac{22}{125} \\
\end{array}
\right). 
\end{align}
\noindent This distribution lives in $\tilde{Q}$, since it admits the almost quantum certificate

\begin{align}
\Gamma=\left(
\begin{array}{ccccccccc}
 1 & \frac{9}{20} & \frac{2}{11} & \frac{2}{11} & \frac{9}{20} & \frac{22}{125} & \frac{\sqrt{2}}{9} & \frac{37}{700} & \frac{22}{125} \\
 \frac{9}{20} & \frac{9}{20} & \frac{17}{155} & \frac{22}{125} & \frac{37}{700} & \frac{22}{125} & \frac{\sqrt{33}}{40} & \frac{37}{700} & \frac{\sqrt{71}}{100} \\
 \frac{2}{11} & \frac{17}{155} & \frac{2}{11} & \frac{\sqrt{2}}{9} & \frac{22}{125} & \frac{\sqrt{33}}{40} & \frac{\sqrt{2}}{9} & \frac{\sqrt{71}}{100} & \frac{22}{125} \\
 \frac{2}{11} & \frac{22}{125} & \frac{\sqrt{2}}{9} & \frac{2}{11} & \frac{17}{155} & \frac{22}{125} & \frac{\sqrt{2}}{9} & \frac{\sqrt{71}}{100} & \frac{\sqrt{33}}{40} \\
 \frac{9}{20} & \frac{37}{700} & \frac{22}{125} & \frac{17}{155} & \frac{9}{20} & \frac{\sqrt{71}}{100} & \frac{\sqrt{33}}{40} & \frac{37}{700} & \frac{22}{125} \\
 \frac{22}{125} & \frac{22}{125} & \frac{\sqrt{33}}{40} & \frac{22}{125} & \frac{\sqrt{71}}{100} & \frac{22}{125} & \frac{\sqrt{33}}{40} & \frac{\sqrt{71}}{100} & \frac{21}{158} \\
 \frac{\sqrt{2}}{9} & \frac{\sqrt{33}}{40} & \frac{\sqrt{2}}{9} & \frac{\sqrt{2}}{9} & \frac{\sqrt{33}}{40} & \frac{\sqrt{33}}{40} & \frac{\sqrt{2}}{9} & \frac{4}{53} & \frac{\sqrt{33}}{40} \\
 \frac{37}{700} & \frac{37}{700} & \frac{\sqrt{71}}{100} & \frac{\sqrt{71}}{100} & \frac{37}{700} & \frac{\sqrt{71}}{100} & \frac{4}{53} & \frac{37}{700} & \frac{\sqrt{71}}{100} \\
 \frac{22}{125} & \frac{\sqrt{71}}{100} & \frac{22}{125} & \frac{\sqrt{33}}{40} & \frac{22}{125} & \frac{21}{158} & \frac{\sqrt{33}}{40} & \frac{\sqrt{71}}{100} & \frac{22}{125} \\
\end{array}
\right).
\end{align}

\noindent However, one can check that
\begin{align}
B(\overline{p}_{\tilde{Q} }) \approx -1.052.
\end{align}

\noindent This value is smaller than the quantum minimum, and thus $\overline{p}_{\tilde{Q} }\not\in Q$.

\section{$\tilde{Q}$ is closed under classical operations}
\label{closed_wirings}

Two parties sharing a number of independent non-local boxes can `wire' them together using classical circuitry to generate a new effective bipartite box \cite{closed}. Clearly, the set of correlations of any physical theory must be closed under these operations (i.e., closed under wirings). As shown in \cite{closed}, closure under wirings is a highly non-trivial property and fairly natural polytopes in the two inputs/two outputs bipartite Bell scenario fail to satisfy it.

In the same line, we next prove a result that suggests that $\tilde{Q}$ does correspond to the set of non-local correlations of a (yet unknown) physical theory: namely, we show that the manipulation of whatever number of almost quantum boxes by $n$ parties cannot generate effective non-local boxes outside the set $\tilde{Q}$.

We will divide the proof in three parts: first, we will prove that $\tilde{Q}$ is closed under post-selection; then, that it is closed under composition and finally, that it is closed under grouping of the parties. Since these three operations exhaust the set of actions which we can perform on a collection of boxes in $\tilde{Q}$, it follows that $\tilde{Q}$ can be regarded as a closed theory.

\begin{lemma}{Closure under post-selection}\\
Let $P(a_1,...,a_n|x_1,...,x_n)$ be almost quantum. Then, the post-selection $P(a_2,...,a_n|x_1,...,x_n,a_1)$ is almost quantum.

\end{lemma}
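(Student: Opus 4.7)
The natural strategy is Lüders' projection rule. Given an almost quantum representation $(\{E_k^{a,x}\}, \ket{\phi})$ of the $n$-partite distribution $P(a_1,\ldots,a_n|x_1,\ldots,x_n)$, I would propose as a representation of the post-selected $(n-1)$-partite distribution the same projectors $\{E_k^{a,x}\}$ for $k=2,\ldots,n$, together with the normalized post-measurement state
\begin{equation}
\ket{\phi'} \equiv \frac{E_1^{a_1,x_1}\ket{\phi}}{\sqrt{P(a_1|x_1)}}.
\end{equation}
We may assume $P(a_1|x_1)>0$, since otherwise the conditional distribution is undefined.

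Three things then need to be checked against Definition \ref{Q1AB}. Condition (\ref{complete}), the completeness relation restricted to parties $k\geq 2$, is inherited verbatim from the original representation, and $\ket{\phi'}$ is normalized because $E_1^{a_1,x_1}$ is a projector. Condition (\ref{prob}) — the probability formula — reduces to showing that $\bra{\phi}E_1^{a_1,x_1}\prod_{k\geq 2}E_k^{a_k,x_k}E_1^{a_1,x_1}\ket{\phi}$ equals $P(a_1,\ldots,a_n|x_1,\ldots,x_n)$. This follows by using condition (\ref{permut}) on $\ket{\phi}$ to commute the rightmost $E_1^{a_1,x_1}$ through the remaining projectors acting on $\ket{\phi}$, and then collapsing the two copies of $E_1^{a_1,x_1}$ via idempotency.

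The more substantive step is verifying the permutation property (\ref{permut}) for the new representation. Given any ordering $\sigma$ of $\{2,\ldots,n\}$, the vector $E_{\sigma(2)}^{a_{\sigma(2)},x_{\sigma(2)}}\cdots E_{\sigma(n)}^{a_{\sigma(n)},x_{\sigma(n)}}\ket{\phi'}$ is proportional to a product of all $n$ original projectors applied to $\ket{\phi}$, with $E_1^{a_1,x_1}$ always occupying the rightmost position. Property (\ref{permut}) of the original representation states that the result of any such ordering of a full product of projectors on $\ket{\phi}$ is independent of the ordering; this immediately yields the required $\sigma$-independence on $\ket{\phi'}$.

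I do not expect a serious obstacle: the proof is essentially the textbook Lüders update, and all of the nontrivial work is done by the permutation axiom (\ref{permut}), which was tailor-made for precisely this sort of rearrangement. The only mild subtlety is bookkeeping — the post-selected distribution is genuinely $(n-1)$-partite, so $E_1^{a_1,x_1}$ must be reinterpreted as part of the state-preparation procedure for the new representation rather than as an observable acting at measurement time on party 1.
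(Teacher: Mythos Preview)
Your proposal is correct and follows essentially the same approach as the paper: define $\ket{\phi'}=E_1^{a_1,x_1}\ket{\phi}/\sqrt{P(a_1|x_1)}$ and verify conditions (\ref{complete})--(\ref{prob}) using the permutation axiom and idempotency of $E_1^{a_1,x_1}$. If anything, you spell out the verification of condition (\ref{prob}) more carefully than the paper does.
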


\begin{proof}
If $P(a_1,...,a_n|x_1,...,x_n)$ is almost quantum, there exist a set of operators $\{E^{a,x}_k\}$ and a quantum state $\ket{\phi}$ with the properties (\ref{complete}), (\ref{permut}), (\ref{prob}). Define $\ket{\phi'}\equiv E^{a_1,x_1}_k\ket{\phi}/\|E^{a_1,x_1}_1\ket{\phi}\|_2$, noting that $\|E^{a_1,x_1}_1\ket{\phi}\|_2=\sqrt{P(a_1|x_1)}$. Then it is straightforward that 

\be
E^{a_2,x_2}_2...E^{a_n,x_n}_n\ket{\phi'}=E^{a_{\pi(2)},x_{\pi(2)}}_{\pi(2)}...E^{a_{\pi(n)},x_{\pi(n)}}_{\pi(n)}\ket{\phi'}
\ee

\noindent for any permutation $\pi\in S_n$ with $\pi(1)=1$. Also, 

\be
\bra{\phi'}\prod_{k=2}^nE^{a_k,x_k}_k\ket{\phi'}=\frac{P(a_1,...,a_n|x_1,...,x_n)}{P(a_1|x_1)}=P(a_2,...,a_n|x_1,...,x_n,a_1).
\ee

\noindent The last two conditions imply that $P(a_2,...,a_n|x_1,...,x_n,a_1)$ is almost quantum.

\end{proof}

\begin{lemma}{Closure under composition}\\
Let $P_A,P_B\in \tilde{Q}$ be $n_A$-partite and $n_B$-partite independent boxes. Then, the $n_A+n_B$-partite box $P_A(\bar{a}|\bar{x}) P_B(\bar{b}|\bar{y})$ that results from the composition is almost quantum.

\end{lemma}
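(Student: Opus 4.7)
The plan is to construct an almost quantum representation for $P_A\cdot P_B$ by tensoring the given representations of $P_A$ and $P_B$. Let $(\H_A,\ket{\phi_A},\{E^{a,x}_k\}_{k=1}^{n_A})$ and $(\H_B,\ket{\phi_B},\{F^{b,y}_l\}_{l=1}^{n_B})$ be almost quantum representations of $P_A$ and $P_B$. Take $\H\equiv \H_A\otimes \H_B$, $\ket{\phi}\equiv \ket{\phi_A}\otimes \ket{\phi_B}$, and for $n\equiv n_A+n_B$ define projectors
$G^{c,z}_k\equiv E^{c,z}_k\otimes \id$ for $k\le n_A$ and $G^{c,z}_{n_A+l}\equiv \id\otimes F^{c,z}_l$ for $l\le n_B$.

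Conditions (\ref{complete}) and (\ref{prob}) of Definition \ref{Q1AB} are then immediate. Completeness is inherited from each factor: for instance $\sum_c G^{c,z}_k=(\sum_c E^{c,z}_k)\otimes\id=\id$. For the probability rule, after rearranging all $A$-type operators to act first (which is valid, see below), the operator $\prod_{k=1}^{n_A}E^{a_k,x_k}_k\otimes\prod_{l=1}^{n_B}F^{b_l,y_l}_l$ acts on $\ket{\phi_A}\otimes\ket{\phi_B}$, and its overlap with $\ket{\phi}$ factors into $P_A(\bar a|\bar x)\,P_B(\bar b|\bar y)$ by property (\ref{prob}) applied to each representation separately.

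The real content is in verifying the permutation-on-state property (\ref{permut}) for an arbitrary $\pi\in S_{n_A+n_B}$, which in general interleaves the $A$ and $B$ indices. The key observation is that any $A$-type operator $E^{a,x}_k\otimes \id$ commutes \emph{exactly as an operator} with any $B$-type operator $\id\otimes F^{b,y}_l$, since they act on disjoint tensor factors. Hence in the product $G^{c_{\pi(1)},z_{\pi(1)}}_{\pi(1)}\cdots G^{c_{\pi(n)},z_{\pi(n)}}_{\pi(n)}\ket{\phi}$ one can freely commute all $A$-type factors to the left and all $B$-type factors to the right, keeping the relative order within each group, without altering the state produced. What remains is a product of $A$-operators acting on $\ket{\phi_A}$ tensored with a product of $B$-operators acting on $\ket{\phi_B}$, in some induced orders $\sigma_A\in S_{n_A}$ and $\sigma_B\in S_{n_B}$. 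Applying condition (\ref{permut}) within each original representation then removes all dependence on $\sigma_A$ and $\sigma_B$, yielding exactly the vector obtained from the identity permutation.

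The only real obstacle is bookkeeping: one must check that the exact $A$--$B$ commutation, combined with the weaker on-state permutation invariance available within each group, suffices to realize every $\pi\in S_{n_A+n_B}$, including those that interleave the two blocks. The factorization above shows this is automatic, so $(\H,\ket{\phi},\{G^{c,z}_j\})$ is an almost quantum representation of $P_A\cdot P_B$.
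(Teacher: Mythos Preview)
Your proof is correct and follows essentially the same approach as the paper: tensor the two almost quantum representations and observe that the resulting state and projectors satisfy Definition~\ref{Q1AB}. The paper leaves the verification as ``straightforward'', whereas you spell out the one point that actually deserves a word---namely, that $A$-type and $B$-type operators commute as operators (not merely on the state), so any permutation in $S_{n_A+n_B}$ factors through the two subgroups and condition~(\ref{permut}) reduces to the assumed invariance within each block.
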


\begin{proof}
Let $\{E^{a,x}_k\}, \ket{\phi_A}$ ($\{F^{(b|y)}_k\}, \ket{\phi_B}$) be an almost quantum representation for the distribution $P_A$ ($P_B$). Define the state $\ket{\phi}=\ket{\phi_A}\otimes \ket{\phi_B}$ and the projectors

\be
\tilde{E}^{a|x}_k=E^{a,x}_k\otimes \id,\tilde{F}^{(b|y)}_k=\id \otimes F^{(b|y)}_k.
\ee

\noindent It is straightforward to verify that such state and projectors are an almost quantum representation for the distribution $P_A(\bar{a}|\bar{x}) P_B(\bar{b}|\bar{y})$.

\end{proof}

\begin{lemma}{Closure under grouping of parties}\\
Let $P(a_1,...,a_m,b_1,...,b_n|x_1,...,x_m,y_1,...,y_m)\in \tilde{Q}$. Suppose that the first $m$ parties join and apply wirings to generate the new distribution $P(\tilde{a},b_1,...,b_n|\tilde{x},y_1,...,y_n)$, where $\tilde{x}, \tilde{a}$ denote, respectively, a wiring and an outcome. Then, $P(\tilde{a},b_1,...,b_n|\tilde{x},y_1,...,y_n)\in \tilde{Q}$.

\end{lemma}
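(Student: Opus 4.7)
The plan is to work with the SDP characterization of $\tilde{Q}$ from Remark \ref{certificate}: given an almost quantum certificate $\Gamma$ for the original $(m+n)$-partite distribution, I construct one for the wired $(1+n)$-partite distribution as a suitable Gram-style linear combination of rows and columns of $\Gamma$.

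Since the certificate conditions are linear in $\Gamma$ and in $P$, the set $\tilde{Q}$ is convex, so every stochastic wiring is a convex combination of deterministic ones and it suffices to handle the deterministic case. Such a wiring is specified by adaptive measurement-choice functions $x_k = g_k(\tilde{x}, a_1,\ldots,a_{k-1})$ for $k=1,\ldots,m$ together with an output function $\tilde{a} = f(\tilde{x}, a_1,\ldots,a_m)$. To each wired event $(\tilde{a}, \bar{b}', \tilde{x}, \bar{y}')$ involving the grouped party I associate the formal vector
\[
\ket{\tilde{a}, \bar{b}', \tilde{x}, \bar{y}'} \;:=\; \sum_{\bar{a}:\, f(\tilde{x},\bar{a}) = \tilde{a}} \ket{\bar{a}, \bar{b}', \bar{x}(\tilde{x},\bar{a}), \bar{y}'},
\]
where $\bar{x}(\tilde{x},\bar{a}) = (g_1(\tilde{x}), g_2(\tilde{x}, a_1), \ldots, g_m(\tilde{x}, a_1,\ldots,a_{m-1}))$ and the right-hand vectors are furnished by the original certificate (extended to sub-events with some $a_k=0$ via the operator identity $E_k^{0,x} = \id - \sum_{a \ne 0} E_k^{a,x}$, which preserves positivity of the Gram matrix). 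The new matrix $\tilde{\Gamma}$ is the Gram matrix of these vectors; equivalently $\tilde{\Gamma} = S \Gamma S^T$ for a $\{0,1\}$ incidence matrix $S$, so $\tilde{\Gamma}\succeq 0$ comes for free.

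Conditions \emph{(b)}, \emph{(c)} and \emph{(d)} of Remark \ref{certificate} follow almost immediately: $\tilde{\Gamma}_{\phi,\phi} = \Gamma_{\phi,\phi} = 1$; the probability entry evaluates to $\sum_{\bar{a}:\,f=\tilde{a}} P(\bar{a},\bar{b}'|\bar{x}(\tilde{x},\bar{a}),\bar{y}') = P(\tilde{a},\bar{b}'|\tilde{x},\bar{y}')$; and the consistency relations, being linear in the entries of $\Gamma$, are preserved by the summation defining $\tilde{\Gamma}$.

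The real obstacle is local orthogonality, condition \emph{(a)}. When two wired events are locally orthogonal through one of the $n$ remaining parties, every pair of sub-events in the defining sums inherits that local orthogonality and the corresponding $\Gamma$ entries all vanish. The delicate case is local orthogonality through the grouped party: $\tilde{x}_1 = \tilde{x}_2 =: \tilde{x}$ with $\tilde{a}_1 \neq \tilde{a}_2$. For any pair of sub-events in the two sums, $\bar{a}\neq\bar{a}'$, so $k^* := \min\{k : a_k \neq a'_k\}$ is well defined. Because the wiring is causal (each $g_k$ depends only on previous outcomes), $g_{k^*}(\tilde{x}, a_{<k^*}) = g_{k^*}(\tilde{x}, a'_{<k^*})$, while by construction $a_{k^*} \neq a'_{k^*}$; the two sub-events are therefore locally orthogonal at original party $k^*$, the corresponding $\Gamma$ entry is zero, and the full sum collapses to $\tilde{\Gamma}_{E_1,E_2} = 0$. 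This ``first-disagreement'' observation is the crux of the argument; with it, $\tilde{\Gamma}$ satisfies all the conditions of Remark \ref{certificate} and certifies that the wired distribution lies in $\tilde{Q}$.
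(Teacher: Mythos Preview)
Your argument is correct and is essentially the paper's own proof: both construct the new event vectors as (sums of) products of the original projectors acting on $\ket{\phi}$ and then verify the characterization of Lemma~\ref{defin_alter}/Remark~\ref{certificate}, with your causal ``first-disagreement'' observation being exactly the mechanism the paper invokes for local orthogonality (the paper merely separates the job into a fine-grained step followed by outcome-grouping via Definition~\ref{Q1AB}, whereas you sum over all $\bar{a}$ with $f(\tilde{x},\bar{a})=\tilde{a}$ in one shot). One small caveat: your claim that condition~(d) is ``preserved by linearity'' is too quick when the shared sub-event is the grouped party itself---there the double sum over $\bar{a},\bar{a}'$ acquires cross terms, and you need the very same first-disagreement argument to kill them before the surviving diagonal reduces to the original condition~(d).
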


\begin{proof}
Let $\{E^{a_k,x_k}_k\}_{k=1}^m, \{F^{b_k,y_k}_k\}_{k=1}^n$, $\ket{\phi}$ be an almost quantum representation for $P$. Now, consider the \emph{fine-grained} wirings effected by the first $m$ parties, i.e., those wirings where any two different sequences of outcomes correspond to different `effective' or `final' outcomes. Like in quantum mechanics, the outcome $\tilde{a}$ of a wiring $\tilde{x}$ can be represented by the product of projectors $E^{\tilde{a},\tilde{x}}\equiv\prod_{k}E^{a_k,x_k}_k$ such that

\be
P(\tilde{a},b_1,...,b_n|\tilde{x},y_1,...,y_n)=\bra{\phi}\prod_{j=1}^nF^{b_j,y_j}_jE^{\tilde{a},\tilde{x}}\ket{\phi}.
\ee

\noindent Now, for any event $(\tilde{a},\bar{b}|\tilde{x},\bar{y})$, define the vector

\be
\ket{\tilde{a},\bar{b},\tilde{x},\bar{y}}\equiv E^{\tilde{a},\tilde{x}}\cdot \prod_{j}F^{b_j,y_j}_j\ket{\phi}.
\ee

These vectors obviously satisfy conditions (\ref{probs}), (\ref{consist}) of Lemma \ref{defin_alter}. To see that they also satisfy condition (\ref{local_orth}), note that, for any two different outcomes $\tilde{a}, \tilde{a}'$ of a fine-grained wiring there exists a party $k\in \{1,...,m\}$ who performed the same measurement $x_k$ and obtained different outcomes $a_k,a'_k$. It follows that 

\be
\braket{\tilde{a},\bar{b},\tilde{x},\bar{y}}{\tilde{a}',\bar{b}',\tilde{x},\bar{y}'}=\bra{\phi}E^{a_k,x_k}_k(E) (F) (E') (F') E^{a'_k,x_k}_k\ket{\phi}=0.
\ee

\noindent By Lemma \ref{defin_alter}, the new distribution is hence almost quantum. 

That the resulting distribution remains almost quantum when the wirings are not fine-grained, i.e., when many different measurement paths are associated to the same outcome, follows from the fact that $\tilde{Q}$ is closed under grouping of various outcomes. From Definition \ref{Q1AB} this is almost immediate: let $a\not=a'$, and let $E^{a,x_k}_k$, $E^{a',x_k}_k$ be the projectors associated to the events $(a|x_k)$, $(a'|x_k)$ in the almost quantum representation of the considered distribution $P$. Then, the operator $E^{a,x_k}_k+E^{a',x_k}_k$ is a projector, orthogonal to the projectors $\{E^{a_k,x_k}_k:a_k\not=a,a'\}$ and satisfying conditions (\ref{permut}), (\ref{prob}) of definition \ref{Q1AB}. Finally, it is obvious that the projectors $\{E^{a,x_k}_k+E^{a',x_k}_k\}\cup \{E^{a_k,x_k}_k:a_k\not=a,a'\}$ satisfy (\ref{complete}).

\end{proof}

\section{$\tilde{Q}$ satisfies non-trivial communication complexity}
\label{ntcc}

Roughly speaking, the axiom of Non-Trivial Communication Complexity (NTCC) states that two spatially separated parties, call them Alice and Bob, cannot compute arbitrary boolean functions with fixed probability greater than 1/2 for all input sizes \cite{brassard}. More concretely, suppose that Alice and Bob are respectively distributed the strings of bits $\bar{x},\bar{y}\in\{0,1\}^n$. Bob's task is to compute the function $f(\bar{x},\bar{y})\in \{0,1\}$, and, to that effect, Alice is allowed to transmit him \emph{one} bit of information. For a particular protocol, call $p(\bar{x},\bar{y})$ the probability that Bob succeeds when the inputs are $\bar{x},\bar{y}$. NTCC then implies that there exists a family of functions $\{f_n:\{0,1\}^n\times\{0,1\}^n\to \{0,1\}\}$ such that no communication protocol can succeed with probability $p(\bar{x},\bar{y})>p>1/2$ \emph{independent of $n$} for all $\bar{x},\bar{y}\in \{0,1\}^n$ and all input sizes $n$. In \cite{brassard} it is shown that boxes with a Clauser-Horne-Shimony-Holt (CHSH) parameter \cite{chsh} greater than $4\sqrt{2/3}$ could be used to devise protocols which violate this principle. NTCC thus imposes non-trivial constraints on the set of non-local correlations.

In the next lines, we will prove that $\tilde{Q}$ satisfies NTCC. We will do so by showing that two parties sharing a number of bipartite boxes $\{Q_i(a,b|x,y)\}_i\subset \tilde{Q}$ cannot solve the inner product problem with worst-case probability $p$ using a fixed amount of 1-way communication (not necessarily just one bit). This is a consequence of the following theorem:

\begin{theo}
Let Alice and Bob share a number of bipartite boxes $\{Q_i(a,b|x,y)\}_i\subset \tilde{Q}$, and consider 1-way communication protocols where they are distributed the $n$-bit strings $\bar{x}, \bar{y}$; Alice is allowed to transmit Bob $m$ bits; and Bob must output a guess $b$ of the inner product $\bar{x}\cdot \bar{y} \mbox{ (mod 2)}$. Then, the worst-case probability of success $p$ must satisfy:

\be
\frac{1}{2^{n-m}}\geq (2p-1)^2.
\label{inequa}
\ee
\end{theo}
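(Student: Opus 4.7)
First, I would invoke closure under wirings (Section~\ref{closed_wirings}) to collapse the entire protocol---Alice and Bob's adaptive use of finitely many almost quantum boxes, together with the $m$-bit classical channel---into a single bipartite almost quantum box with Alice's input $\bar{x}$ and output $a\in\{0,1\}^m$ (her $m$-bit message), and Bob's input $(\bar{y},a)$ and binary output $b$ (his guess). By Definition~\ref{Q1AB} this box has an almost quantum representation: a Hilbert space $\H$, a normalised state $\ket{\phi}$, and projector families $\{A^{a,\bar{x}}\}$, $\{B^{b,\bar{y},a}\}$ whose ordered products on $\ket{\phi}$ are permutation-invariant.

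I would then turn the worst-case success probability into an operator-theoretic quantity. Setting $\ket{\psi_{a,\bar{x}}}:=A^{a,\bar{x}}\ket{\phi}$ (so that $\{\ket{\psi_{a,\bar{x}}}\}_a$ are mutually orthogonal with $\sum_a\|\psi_{a,\bar{x}}\|^2=1$) and the self-inverse observable $\tilde B^{\bar{y},a}:=B^{0,\bar{y},a}-B^{1,\bar{y},a}$, a brief calculation using the on-state commutation and projector orthogonality yields $2p(\bar{x},\bar{y})-1=(-1)^{\bar{x}\cdot\bar{y}}\sum_a\bra{\psi_{a,\bar{x}}}\tilde B^{\bar{y},a}\ket{\psi_{a,\bar{x}}}$. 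Bundling these into an auxiliary unit vector $\ket{\Psi_{\bar{x}}}:=\sum_a\ket{a}\otimes\ket{\psi_{a,\bar{x}}}\in\mathbb{C}^{2^m}\otimes\H$ (of Schmidt rank at most $2^m$) and the self-inverse, block-diagonal observable $\mathcal{B}^{\bar{y}}:=\sum_a\ket{a}\bra{a}\otimes\tilde B^{\bar{y},a}$ produces the compact identity $(-1)^{\bar{x}\cdot\bar{y}}(2p(\bar{x},\bar{y})-1)=\bra{\Psi_{\bar{x}}}\mathcal{B}^{\bar{y}}\ket{\Psi_{\bar{x}}}$.

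The core of the proof is then a Fourier/Parseval analysis in $\bar{x}\in\{0,1\}^n$. With $\rho_{\bar{x}}:=\ket{\Psi_{\bar{x}}}\bra{\Psi_{\bar{x}}}$ and $\hat\rho_{\bar{y}}:=2^{-n}\sum_{\bar{x}}(-1)^{\bar{x}\cdot\bar{y}}\rho_{\bar{x}}$, the worst-case hypothesis $2p(\bar{x},\bar{y})-1\ge 2p-1$ gives $\mbox{tr}(\hat\rho_{\bar{y}}\mathcal{B}^{\bar{y}})\ge 2p-1$ for every $\bar{y}$, so $\sum_{\bar{y}}\mbox{tr}(\hat\rho_{\bar{y}}\mathcal{B}^{\bar{y}})^2\ge 2^n(2p-1)^2$. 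Parseval applied to the rank-one $\rho_{\bar{x}}$'s yields $\sum_{\bar{y}}\mbox{tr}(\hat\rho_{\bar{y}}^2)=1$. Combining these with a Cauchy--Schwarz bound of the form $\mbox{tr}(\hat\rho_{\bar{y}}\mathcal{B}^{\bar{y}})^2\le 2^m\,\mbox{tr}(\hat\rho_{\bar{y}}^2)$ then delivers the advertised $(2p-1)^2\le 2^{m-n}$.

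The hard part will be this last Cauchy--Schwarz step. In the standard quantum one-way-communication proof---where Alice's message lives in a genuine $2^m$-dimensional register and $\mathcal{B}^{\bar{y}}$ acts on that register alone---it is immediate from $\mbox{tr}((\mathcal{B}^{\bar{y}})^2)\le 2^m$. In the almost quantum setting, $\mathcal{B}^{\bar{y}}$ also acts on the potentially infinite-dimensional $\H$, so that shortcut is unavailable. The resolution must exploit the block-diagonal structure of $\mathcal{B}^{\bar{y}}$ in the $2^m$-dimensional message register: the Hilbert--Schmidt inner product with $\hat\rho_{\bar{y}}$ decomposes across the $2^m$ blocks labelled by $a$, and the factor of $2^m$ is extracted via a Cauchy--Schwarz over these blocks, while the on-state commutation of the almost quantum representation---substituting for the genuine tensor-product factorisation that would be available in ordinary quantum mechanics---controls the interaction between each block's contribution and the ambient Hilbert space $\H$.
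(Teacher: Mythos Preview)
Your setup through the Parseval identity is correct, but the proposal has a genuine gap precisely where you flag it: the inequality $\tr(\hat\rho_{\bar y}\,\mathcal B^{\bar y})^2\le 2^m\,\tr(\hat\rho_{\bar y}^{\,2})$ is not established, and block-diagonality of $\mathcal B^{\bar y}$ together with vague appeal to on-state commutation does not deliver it. To see that block structure alone is insufficient, take $m=0$ (a single block) and suppose the vectors $\ket{\psi_{\bar x}}=A^{\bar x}\ket{\phi}$ happen to be pairwise orthogonal; then $\tr(\hat\rho_{\bar y}^{\,2})=2^{-n}$, yet nothing in your argument prevents $\tilde B^{\bar y}$ from acting as $\sum_{\bar x}(-1)^{\bar x\cdot\bar y}\proj{\psi_{\bar x}}$ on that span, which would give $\tr(\hat\rho_{\bar y}\tilde B^{\bar y})=1$ and destroy the bound. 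The on-state commutation $\tilde B^{\bar y,a}A^{a,\bar x}\ket{\phi}=A^{a,\bar x}\tilde B^{\bar y,a}\ket{\phi}$ only tells you that $\tilde B^{\bar y,a}\ket{\psi_{a,\bar x}}\in\mbox{range}(A^{a,\bar x})$, which says nothing about how the contributions from different $\bar x$ interfere inside a fixed block. You have correctly located the obstruction but not removed it.

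The paper takes a different route that sidesteps this difficulty entirely. Rather than a Parseval/rank argument, it builds from the almost quantum representation an auxiliary \emph{non-signalling} box $P'(\bar a,\bar c\,|\,\bar x,\bar z)$ with $\bar c\in\{0,1\}^n$: Bob adjoins $n$ ancilla qubits in the uniform superposition over $\bar y$ plus one target qubit, applies the controlled unitary $U_{\bar z}=\sum_{\bar y,b}F^{b,[\bar z,\bar y]}\otimes\proj{\bar y}\otimes\sigma^b$, and measures the $\bar y$-register in the Hadamard basis (a Bernstein--Vazirani manoeuvre). The on-state commutation is invoked exactly once, to show that the resulting $P'$ is non-negative; a direct norm computation then gives $\sum_{\bar a}P'(\bar a,\bar x\,|\,\bar x,\bar a)\ge(2p-1)^2$, and the factor $2^{m-n}$ comes from a purely classical no-signalling argument (Bob cannot guess Alice's $n$-bit input with probability exceeding $2^{-n}$, and randomly guessing the $m$-bit message costs only $2^{-m}$). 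The $2^m$ you are trying to squeeze out of a Hilbert-space Cauchy--Schwarz thus emerges instead from the cardinality of Alice's message alphabet in a no-signalling guessing bound.
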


\noindent If $\tilde{Q}$ had fixed 1-way communication complexity, i.e., if the same number $m$ of bits sufficed for any input size $n$, then, for any $p>1/2$, the above inequality would be violated by taking $n$ high enough.

Following the lines of \cite{nielsen}, we will prove the theorem by showing that a protocol allowing Alice and Bob to compute the inner product with great probability implies the existence of a non-signalling box that allows Bob to guess Alice's $n$-bit input $\bar{x}$ from her $m$-bit message. We will then prove that this is impossible for $m<n$. The proof hence relies on the next two lemmas:

\begin{lemma}
\label{quasi_nielsen}
Let Alice and Bob share a distribution $P(\bar{a},b|\bar{x},[\bar{z}, \bar{y}])\in \tilde{Q}$, where $\bar{x}$ and $[\bar{z},\bar{y}]$ denote, respectively, Alice's and Bob's measurement settings, and $\bar{a},\bar{z}\in \{0,1\}^m$, $\bar{x},\bar{y}\in \{0,1\}^n$, $b\in\{0,1\}$. Then, there exists a non-signalling bipartite distribution $P'(\bar{a},\bar{c}|\bar{x},\bar{z})$, with $\bar{c}\in \{0,1\}^n$, such that

\be
\sum_{\bar{a}}P'(\bar{a},\bar{x}|\bar{x},\bar{a})\geq \sum_{\bar{a}}P(\bar{a}|\bar{x})(2P(\mbox{success}|\bar{x},\bar{a})-1)^2,
\label{robustness}
\ee

\noindent with

\be
P(\mbox{success}|\bar{x},\bar{a})\equiv\frac{1}{2^n}\sum_{\bar{y}}P(b=\bar{x}\cdot\bar{y}|\bar{x},\bar{a},[\bar{a},\bar{y}]).
\ee

\end{lemma}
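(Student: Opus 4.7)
The plan is to mimic Nielsen's Fourier-sampling argument \cite{nielsen} directly at the level of the almost quantum representation rather than at the level of a Hilbert-space tensor product. Fix an almost quantum representation $\{E_A^{\bar{a},\bar{x}}\},\{F_B^{b,[\bar{z},\bar{y}]}\},\ket{\phi}$ for $P$, and for every $\bar{z}$ introduce the $\pm 1$ observable $B_{\bar{y},\bar{z}}:=F_B^{0,[\bar{z},\bar{y}]}-F_B^{1,[\bar{z},\bar{y}]}$ together with its $\bar{c}$-Fourier transform
\[
A_{\bar{c},\bar{z}}\;:=\;\sum_{\bar{y}\in\{0,1\}^n}(-1)^{\bar{c}\cdot\bar{y}}\,B_{\bar{y},\bar{z}}.
\]
I would then define the candidate box by
\[
P'(\bar{a},\bar{c}\mid\bar{x},\bar{z})\;:=\;\frac{1}{2^{2n}}\,\bigl\|A_{\bar{c},\bar{z}}\,E_A^{\bar{a},\bar{x}}\ket{\phi}\bigr\|^{2},
\]
which is manifestly non-negative.

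The crucial structural observation is that $A_{\bar{c},\bar{z}}$ is a \emph{linear} combination of single Bob projectors, so the permutation identity (\ref{permut}) of Definition \ref{Q1AB} extends by linearity to $A_{\bar{c},\bar{z}}E_A^{\bar{a},\bar{x}}\ket{\phi}=E_A^{\bar{a},\bar{x}}A_{\bar{c},\bar{z}}\ket{\phi}$. Combined with the involution $B_{\bar{y},\bar{z}}^{2}=\id$ and the character identity $\sum_{\bar{c}}(-1)^{\bar{c}\cdot(\bar{y}+\bar{y}')}=2^{n}\delta_{\bar{y},\bar{y}'}$, the two marginals of $P'$ collapse to manifestly non-signalling expressions: one finds $\sum_{\bar{c}}P'(\bar{a},\bar{c}\mid\bar{x},\bar{z})=P(\bar{a}\mid\bar{x})$ via $\sum_{\bar{c}}A_{\bar{c},\bar{z}}^{2}=2^{2n}\id$, and $\sum_{\bar{a}}P'(\bar{a},\bar{c}\mid\bar{x},\bar{z})=2^{-2n}\bra{\phi}A_{\bar{c},\bar{z}}^{2}\ket{\phi}$ via $\sum_{\bar{a}}E_A^{\bar{a},\bar{x}}=\id$ plus one more application of the permutation. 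Normalisation is then automatic.

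For the quantitative bound I would pass to the post-selected state $\ket{\phi_{\bar{a},\bar{x}}}:=E_A^{\bar{a},\bar{x}}\ket{\phi}/\sqrt{P(\bar{a}\mid\bar{x})}$, whose existence and good behaviour are ensured by the closure-under-post-selection lemma of Section \ref{closed_wirings}. The permutation identity lets us recast $P'$ as
\[
P'(\bar{a},\bar{c}\mid\bar{x},\bar{z})=\frac{P(\bar{a}\mid\bar{x})}{2^{2n}}\bra{\phi_{\bar{a},\bar{x}}}A_{\bar{c},\bar{z}}^{2}\ket{\phi_{\bar{a},\bar{x}}}.
\]
Specialising to $\bar{c}=\bar{x}$, $\bar{z}=\bar{a}$, the inequality $\bra{\psi}A^{2}\ket{\psi}\ge\bra{\psi}A\ket{\psi}^{2}$ applied to the Hermitian operator $A_{\bar{x},\bar{a}}$, together with the direct computation $\bra{\phi_{\bar{a},\bar{x}}}A_{\bar{x},\bar{a}}\ket{\phi_{\bar{a},\bar{x}}}=2^{n}(2P(\mathrm{success}\mid\bar{x},\bar{a})-1)$, yields $P'(\bar{a},\bar{x}\mid\bar{x},\bar{a})\ge P(\bar{a}\mid\bar{x})(2P(\mathrm{success}\mid\bar{x},\bar{a})-1)^{2}$; summing over $\bar{a}$ gives (\ref{robustness}).

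The step I expect to require the most care is the non-signalling verification, because the squared operator $A_{\bar{c},\bar{z}}^{2}$ contains cross products $B_{\bar{y},\bar{z}}B_{\bar{y}',\bar{z}}$ with $\bar{y}\neq\bar{y}'$ that fall outside the single-measurement-per-party event structure sanctioned by Definition \ref{Q1AB}; a priori one cannot permute an $E_A$ past such a product on $\ket{\phi}$. The saving grace is that these cross terms are always sandwiched between $E_A^{\bar{a},\bar{x}}$'s and then either summed over $\bar{a}$, collapsing the sandwich to $A^{2}$ via completeness, or summed over $\bar{c}$, which forces $\bar{y}=\bar{y}'$ via the character identity so that $B^{2}=\id$ eliminates the Bob operators altogether. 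In both cases the cross terms never have to be evaluated individually, and the argument stays within the two applications of the permutation identity available in the almost quantum framework.
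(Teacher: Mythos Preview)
Your proposal is correct and is essentially the same argument as the paper's, only presented more economically. The paper introduces $n+1$ ancilla qubits and a unitary $U_{\bar z}$ (following \cite{nielsen} literally) to define $P'$, but once the ancillas are traced out one finds exactly your expression $P'(\bar a,\bar c\mid\bar x,\bar z)=2^{-2n}\langle\phi|E^{\bar a,\bar x}A_{\bar c,\bar z}^{2}|\phi\rangle=2^{-2n}\|A_{\bar c,\bar z}E^{\bar a,\bar x}\ket{\phi}\|^{2}$; the paper's non-negativity step (conjugating $E$ inside $U^\dagger(\cdot)U$) is the permutation $AE\ket\phi=EA\ket\phi$ you invoke, and its lower bound via $\ket{V^{\bar a,\bar x}}=c\ket{\bar a,\bar x}+\ket{\perp}$ is your variance inequality $\langle A^{2}\rangle\ge\langle A\rangle^{2}$. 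Your last paragraph identifies precisely the delicate point (one may only permute $E$ past a \emph{single} $A$, never past $A^{2}$) and resolves it the same way the paper implicitly does: the character sum kills the cross terms when summing over $\bar c$, and the manipulation $\sum_{\bar a}\langle\phi|E^{\bar a}A^{2}E^{\bar a}|\phi\rangle=\sum_{\bar a}\langle\phi|AE^{\bar a}A|\phi\rangle=\langle\phi|A^{2}|\phi\rangle$ needs only one permutation on each side plus completeness. What you gain over the paper is that the ancilla bookkeeping disappears entirely; what the paper's formulation buys is a more transparent link to the original quantum protocol of \cite{nielsen}.
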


\noindent For a proof, see Appendix \ref{non_triv_ap}.

\begin{lemma}
\label{no_sig}
Let $P(\bar{a},\bar{c}|\bar{x},\bar{z})$ be a non-signalling probability distribution, with $\bar{a},\bar{z}\in \{0,1\}^m$ and $\bar{c},\bar{x}\in \{0,1\}^n$. Then,

\be
\frac{1}{2^n}\sum_{\bar{x}}\sum_{\bar{a}}P(\bar{a},\bar{x}|\bar{x},\bar{a})\leq \frac{1}{2^{n-m}}.
\ee

\end{lemma}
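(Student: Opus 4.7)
The plan is to exploit non-signalling from Alice to Bob to reduce the LHS to a sum of marginal probabilities on Bob's side, which is then bounded trivially by normalization.

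First I would invoke the non-signalling condition: the marginal distribution of Bob's output $\bar{c}$ given inputs $(\bar{x},\bar{z})$ must not depend on Alice's input $\bar{x}$, i.e.\
\be
\sum_{\bar{a}} P(\bar{a},\bar{c}|\bar{x},\bar{z})=P(\bar{c}|\bar{z}) \quad \text{for all } \bar{x}.
\ee
In particular, this implies the pointwise upper bound $P(\bar{a},\bar{c}|\bar{x},\bar{z})\leq P(\bar{c}|\bar{z})$ for every $\bar{a}$.

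Next I would fix $\bar{a}\in\{0,1\}^m$ and specialize the inequality to $\bar{c}=\bar{x}$ and $\bar{z}=\bar{a}$. Summing over $\bar{x}\in\{0,1\}^n$, we obtain
\be
\sum_{\bar{x}} P(\bar{a},\bar{x}|\bar{x},\bar{a}) \leq \sum_{\bar{x}} P(\bar{c}=\bar{x}|\bar{z}=\bar{a}) = 1,
\ee
where the last equality is just the normalization of $P(\cdot|\bar{z}=\bar{a})$ as a probability distribution over $\bar{c}\in\{0,1\}^n$. Summing the resulting bound over the $2^m$ possible values of $\bar{a}$ gives $\sum_{\bar{x},\bar{a}} P(\bar{a},\bar{x}|\bar{x},\bar{a}) \leq 2^m$, and dividing by $2^n$ yields the stated inequality.

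There is no real obstacle here — the lemma is essentially the statement that Bob cannot guess an $n$-bit string from an $m$-bit distribution better than chance, dressed up in a non-signalling bipartite-box language. The only subtlety worth flagging is that the inequality uses no-signalling only in the Alice-to-Bob direction; Bob-to-Alice non-signalling is not needed.
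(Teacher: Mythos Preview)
Your proof is correct. The core ingredients are identical to the paper's: the Alice-to-Bob no-signalling condition $\sum_{\bar{a}}P(\bar{a},\bar{c}|\bar{x},\bar{z})=P(\bar{c}|\bar{z})$, followed by normalization of Bob's marginal. The paper, however, packages these steps as an operational guessing game: Bob picks $\bar{z}=\bar{a}'$ uniformly at random, outputs his guess $\bar{c}$ for $\bar{x}$, and the success probability is lower bounded (by keeping only the terms $\bar{a}'=\bar{a}$) by $\tfrac{1}{2^{n+m}}\sum_{\bar{x},\bar{a}}P(\bar{a},\bar{x}|\bar{x},\bar{a})$ and shown to equal $\tfrac{1}{2^n}$ by no-signalling. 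Your pointwise bound $P(\bar{a},\bar{c}|\bar{x},\bar{z})\leq P(\bar{c}|\bar{z})$ is exactly the ``drop the $\bar{a}'\neq\bar{a}$ terms'' step, and your normalization $\sum_{\bar{x}}P(\bar{c}=\bar{x}|\bar{z}=\bar{a})=1$ is the $P_{\text{succ}}=1/2^n$ computation, so the two arguments are algebraically the same. Your direct presentation is shorter and avoids the protocol narrative; the paper's framing makes the information-theoretic meaning (Bob cannot learn an $n$-bit string from $m$ effective bits) more transparent. Your remark that only Alice-to-Bob no-signalling is needed is accurate and is implicit in the paper's argument as well.
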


\begin{proof}
Consider the following protocol: Alice is distributed the random sequence of bits $\bar{x}$ with probability $\frac{1}{2^n}$, she measures $\bar{x}$ and obtains the result $\bar{a}\in \{0,1\}^m$ with probability $P(\bar{a}|\bar{x})$. Then Bob makes a completely random guess $\bar{a}'$ on the value of $\bar{a}$ and inputs it into his box. Bob's outcome, $\bar{c}$, will be Bob's guess on the value of $\bar{x}$. With probability $1/2^m$, Bob's guess on $\bar{a}$ will be correct, i.e., $\bar{a}=\bar{a}'$, in which case he will guess Alice's input with probability $P(\bar{a},\bar{x}|\bar{x},\bar{a})/P(\bar{a}|\bar{x})$. It follows that Bob's strategy to guess $\bar{x}$ will succeed with probability at least

\be
\frac{1}{2^n}\sum_{\bar{x}}\sum_{\bar{a}}P(\bar{a}|\bar{x})\frac{1}{2^m}\frac{P(\bar{a},\bar{x}|\bar{x},\bar{a})}{P(\bar{a}|\bar{x})}.
\ee

On the other hand, in this protocol no information has been transmitted to Bob. By no-signalling, it follows that, no matter what his strategy is, Bob will guess Alice's input $\bar{x}$ with probability $1/2^n$. This, together with the above lower bound on the success probability, implies the main claim.

\end{proof}

Let us put everything together: first, note that we can always map any inner product protocol to a non-locality scenario where Alice's measurement is labeled by $\bar{x}$, and her outcome, by the $m$-bit message $\bar{a}$ that she sends to Bob. Bob's measurement setting is labeled by his bit-string $\bar{y}$ together with Alice's message $\bar{z}$, i.e., Bob's measurement is described by the pair $[\bar{z},\bar{y}]$. Bob's outcome will be the bit $b$ that (he hopes!) satisfies $b=\bar{x}\cdot\bar{y}$. Since $\tilde{Q}$ is closed under wirings, Alice and Bob's box $P(\bar{a},b|\bar{x},[\bar{z}, \bar{y}])$ must necessarily belong to $\tilde{Q}$.

Now, suppose that Alice and Bob can compute the inner product probabilistically with $m$ bits of communication via the box $P(\bar{a},b|\bar{x},[\bar{z}, \bar{y}])\in \tilde{Q}$, with worst-case success probability $p>\frac{1}{2}$. That is,

\be
\sum_{\bar{a}}P(\bar{a},b=\bar{x}\cdot \bar{y}|\bar{x},[\bar{a},\bar{y}])\geq p,
\ee

\noindent for all $\bar{x}, \bar{y}$. Such a box therefore satisfies

\be
\sum_{\bar{a}}P(\bar{a}|\bar{x})(2P(\mbox{success}|\bar{x},\bar{a})-1)\geq 2p-1>0,
\ee

\noindent with $P(\mbox{success}|\bar{x},\bar{a})$ defined as in Lemma \ref{quasi_nielsen}. It follows that

\be
\sum_{\bar{a}}P(\bar{a}|\bar{x})(2P(\mbox{success}|\bar{x},\bar{a})-1)^2 \geq (2p-1)^2.
\ee

\noindent By Lemma \ref{quasi_nielsen} we thus have that there exists a non-signalling distribution $P'(\bar{a},\bar{c}|\bar{x},\bar{z})$ such that

\be
\sum_{\bar{a}}P'(\bar{a},\bar{x}|\bar{x},\bar{a})\geq (2p-1)^2.
\ee

\noindent Dividing by $\frac{1}{2^n}$, summing over $\bar{x}$ and applying Lemma \ref{no_sig}, we arrive at eq. (\ref{inequa}).

\section{$\tilde{Q}$ satisfies Macroscopic Locality}
\label{ml}

Macroscopic Locality (ML) states that coarse-grained extensive measurements of $N$ independent particle pairs must admit a local hidden variable model in the limit $N\to \infty$, see \cite{mac_loc} for details. ML is justified on the grounds that any reasonable physical theory must have a classical limit; ergo, `natural' macroscopic experiments should be describable by a classical theory, and consequently their associated statistics must be local realistic.

In \cite{mac_loc} it is shown that the set of bipartite distributions compatible with this principle corresponds to $Q^1$, a semidefinite programming relaxation of the set of quantum correlations firstly introduced in \cite{NPA}. $Q^1$ is defined as the set of all non-signalling correlations $P(a,b|x,y)$ such that there exists a positive semidefinite matrix $\gamma$ -with rows and columns labeled by the events $\{\phi\}\cup\{(a|x):a\not=0\}\cup\{(b|y):b\not=0\}$- of the form:

\be
\gamma=\left(\begin{array}{ccc}1&\vec{p}^T_A&\vec{p}^T_B\\ \vec{p}_A&Q&P\\\vec{p}_B&P^T&R\end{array}\right),
\label{comp_mat}
\ee

\noindent where $\vec{p}_A$ ($\vec{p}_B$) is Alice's (Bob's) vector of marginal probabilities, and

\be
Q_{(a,x),(a',x)}=P(a|x)\delta_{a,a'}, P_{(a,x),(b,y)}=P(a,b|x,y), R_{(b,y),(b',y)}=P(b|y)\delta_{b,b'}.
\label{linear_cons}
\ee

From Remark \ref{certificate}, it is easy to see that any bipartite distribution $P(a,b|x,y)\in \tilde{Q}$ satisfies ML. Indeed, let $\Gamma$ be an almost quantum certificate for $P(a,b|x,y)$. Then one can verify that the submatrix $\tilde{\gamma}$ of $\Gamma$ given by $\tilde{\gamma}=\{\Gamma_{\alpha,\beta}:\alpha,\beta\in \{\phi\}\cup\{(a|x):a\not=0\}\cup\{(b|y):b\not=0\}\}$ satisfies conditions (\ref{linear_cons}). Also, since it is a submatrix of $\Gamma$, it is positive semidefinite.

\section{$\tilde{Q}$ satisfies No Advantage for Nonlocal Computation}
\label{nanc}

Nonlocal computation is an information processing task introduced in \cite{linden} by Linden \emph{et al.}. In this primitive, an $n$-bit string $z\in\{0,1\}^{n}$ is distributed with prior probability $\tilde{p}(z)\geq 0$, with $\sum_{z}\tilde{p}(z)=1$. The goal behind nonlocal computation is to have two non-communicating parties, Alice and Bob, evaluate the Boolean function $f:\{0,1\}^{n}\rightarrow\{0,1\}$ on $z$ while learning nothing about the value of $z$. For that purpose, a fully random $n$-bit string $x$ is generated and sent to Alice, while Bob receives the bit string $y\equiv z\oplus x$. Given inputs $x,y$, Alice and Bob's task is to produce two binary outputs $a,b$ such that $a\oplus b=f(x\oplus y)=f(z)$.

The figure of merit of nonlocal computation is Alice and Bob's average success at computing $f$, given by the expression:
\begin{equation}
P(f)=\frac{1}{2^{n}}\sum_{xy}\tilde{p}(x\oplus y)P(a\oplus b=f(x \oplus y)|xy).
\end{equation}
If Alice and Bob are restricted to use classical resources, that is, if they can only have shared randomness between them, then the maximum probability of success is:
\begin{equation}\label{classical}
P_{C}(f)=\frac{1}{2}\left(1+\underset{u\in\{0,1\}^{n}}{\textrm{max}}\left|\sum_{z}(-1)^{f(z)+u\cdot z}\tilde{p}(z)\right|\right).
\end{equation}
Furthermore, as shown in \cite{linden}, if the two parties share entangled states, they cannot do any better. On the other hand, there exist non-signalling resources that would allow them to beat the value $P_{C}(f)$. 

Next we will show that, even when the two parties are distributed general ML distributions, they still cannot beat the classical value given by eq. (\ref{classical}). Since by the last section $\tilde{Q}\subset Q^1$, it follows that $\tilde{Q}$ also satisfies NANLC.

Let $P(a,b|x,y)\in Q^{1}$, with $a,b\in \{0,1\}$, and let $\gamma$ be a positive semidefinite matrix of the form (\ref{comp_mat}). From its Gram decomposition $\gamma_{\alpha,\beta}=\braket{\alpha}{\beta}$, we obtain the vectors $\ket{\phi}$, $\ket{1,x}$, $\ket{1,y}$. Defining $\ket{0,x}\equiv \ket{\phi}-\ket{1,x}$, $\ket{0,y}\equiv \ket{\phi}-\ket{1,y}$, we have that $\braket{a,x}{a',x}=P(a|x)\delta_{aa'}$, $\braket{b,y}{b',y}=P(b|y)\delta_{bb'}$ and $\braket{a,x}{b,y}=P(a,b|x,y)$.

In terms of these vectors, $P(a\oplus b=f(x \oplus y)|xy)$ can be written as
\begin{eqnarray}
P(a\oplus b=f(x \oplus y)|xy)&=&\sum_{a,b}\delta^{a\oplus b}_{f(x \oplus y)}\braket{a,x}{b,y}\\
&=&\frac{1}{2}\sum_{a,b}\left(1+(-1)^{a+b+f(x \oplus y)}\right)\braket{a,x}{b,y}\\
&=&\frac{1}{2}\left(1+\sum_{a,b}(-1)^{a+b+f(x \oplus y)}\braket{a,x}{b,y}\right).
\end{eqnarray}
The average success probability for the nonlocal computation of $f$ is hence
\begin{eqnarray}
P_{Q^{1}}(f)&=&\frac{1}{2^{n+1}}\sum_{x,y}\tilde{p}(x\oplus y)\left(1+\sum_{a,b}(-1)^{a+b+f(x\oplus y)}\braket{a,x}{b,y}\right)\\
&=&\frac{1}{2}+\frac{1}{2^{n+1}}\sum_{a,b,x,y}\tilde{p}(x\oplus y)(-1)^{a+b+f(x\oplus y)}\braket{a,x}{b,y}.
\end{eqnarray}
Following Ref. \cite{linden}, we now introduce the vectors $\ket{\alpha},\ket{\beta}$ and the operator $\Phi$:
\begin{eqnarray}
\langle\alpha|&=&\frac{1}{\sqrt{2^{n}}}\sum_{x}\left(\sum_{a}(-1)^{a}\bra{a,x}\right)\otimes\langle x|\\
|\beta\rangle&=&\frac{1}{\sqrt{2^{n}}}\sum_{y}\left(\sum_{b}(-1)^{b}\ket{b,y}\right)\otimes|y\rangle\\
\Phi &=&\sum_{x,y}(-1)^{f(x\oplus y)}\tilde{p}(x \oplus y)|x\rangle\langle y|,
\end{eqnarray}
where $x$ and $y$ label the computational basis states. Given these expressions, we observe that
\begin{equation}
P_{Q^{1}}(f)=\frac{1}{2}\left(1+\langle\alpha|(\mathbb{I}\otimes\Phi)|\beta\rangle\right).
\end{equation}
Since $\langle\alpha|$ and $|\beta\rangle$ are normalized vectors, we have that
\begin{equation}
P_{Q^{1}}(f)\leq\frac{1}{2}\left(1+|\langle\alpha||\|\mathbb{I}\otimes\Phi\|||\beta\rangle\right|)= \frac{1}{2}\left(1+\|\Phi\|\right).
\end{equation}
As shown in Ref. \cite{linden}, this last expression is upper-bounded by \eqref{classical}. Therefore, $Q^{1}$ does no better than classical physics in nonlocal computation.

\section{$\tilde{Q}$ satisfies Local Orthogonality}
\label{lo}
Consider any set $E$ of pairwise locally orthogonal events, as defined in Section \ref{sec_defin} (i.e., $e\perp e'$ iff $\exists k$ s.t. $x_k=x'_k$, $a_k\not=a_k'$). Local Orthogonality (LO) \cite{loc_orth} states that the sum of the probabilities of each event in $E$ cannot exceed 1, that is,

\be
\sum_{e\in E} P(e)\leq 1.
\ee

\noindent This principle is equivalent to demanding that distributed guessing problems which are maximally difficult classically remain so when the parties involved are assisted with non-local resources, see \cite{loc_orth} for an explanation.

The proof that $\tilde{Q}$ satisfies LO has already appeared in \cite{graph2}, but, for the sake of completeness, we present here an alternative derivation that does not rely on graph-theoretical concepts. Let $P(a_1,...,a_n|x_1,...,x_n)\in \tilde{Q}$, and let $\ket{\phi}$, $\{E^{a,x}_k\}$ be its almost quantum representation. For any event $(\bar{a}|\bar{x})$, define the vector $\ket{e}\equiv \prod_k E^{a_k,x_k}_k\ket{\phi}$. It is easy to see that, for any two locally orthogonal events, $e\perp e'$, $\braket{e}{e'}=0$. Moreover, one can prove that, for any event $e$, 

\be
\ket{\phi}=\ket{e}+\ket{e^\perp},
\ee
\noindent with $\braket{e}{e^\perp}=0$. This follows from the fact that

\be
\sum_{\bar{a}}\left(\prod_kE_k^{a_k,x_k}\ket{\phi}\right)=\prod_k\left(\sum_{a_k}E_k^{a_k,x_k}\right)\ket{\phi}=\ket{\phi},
\ee

\noindent and the observation that each of the vectors in the left hand side of the above equation is orthogonal to all the others\footnote{This follows from Property (\ref{permut}) in Definition \ref{Q1AB} and the relation $E_k^{a_k,x_k}E_k^{a'_k,x_k}=\delta_{a_k,a_k'}E_k^{a_k,x_k}$.}. For any event $e$, we thus have that $P(e)=\bra{\phi}\Pi_e\ket{\phi}$, with $\Pi_e\equiv\frac{\proj{e}}{\braket{e}{e}}$. 

Now, consider an arbitrary set $E$ of locally orthogonal events. By the above considerations we have that

\be
\sum_{e\in E}P(e)=\sum_{e\in E}\bra{\phi}\Pi_e\ket{\phi}=\bra{\phi}\sum_{e\in E} \Pi_e\ket{\phi}\leq 1,
\ee

\noindent where the last inequality is due to the fact that the norm of a sum of orthogonal projectors is either 0 or 1.

\section{Evidence that $\tilde{Q}$ satisfies Information Causality}
\label{ic}

Consider a bipartite scenario, similar to that of Section \ref{ntcc}, where Alice (Bob) receives a completely random $n$-bit string (a random number) $x_1,...,x_n$ ($k\in \{1,...,n\}$).  Bob's task consists in making a guess $b$ for Alice's bit $x_k$. To aid him, Alice is allowed to send Bob $m<n$ bits of information. 

Note that, if this protocol could be played perfectly, as soon as Alice sent her $m$ bits, Bob would be in possession of a box that would \emph{potentially} contain $n$ of Alice's bits. In this scenario, however, one would intuitively expect Bob's system to hold no more than $m$ potential bits of information. The principle of Information Causality \cite{marcin} tries to capture this intuition by stating that:

\be
\sum_{k=1}^nI(b:x_k|k)\leq m.
\ee

\noindent Here $I(A:B)$ denotes the mutual information between the random variables $A$ and $B$, i.e., $I(A:B)=H(A)+H(B)-H(A,B)$, with $H(Z)=-\sum_ZP_Z\log_2(P_Z)$.

The exact constraints that IC places on the strength of nonlocal correlations are, up to this day, unknown. However, this topic has received considerable attention, and several limitations in different nonlocality scenarios have been established \cite{recover,ahanj,ic_ml,xiang,mafalda}. In the following, we will combine theoretical considerations with the numerical characterization of $\tilde{Q}$ derived in Section \ref{sec_defin} to show that, in all such studies, $\tilde{Q}$ constitutes either the same or a better approximation to the quantum set.

Let us start with the original IC paper \cite{marcin}: there it is shown that, in bipartite scenarios with two inputs and two outputs, IC implies that the \emph{two-point correlators} $\langle A_xB_y\rangle\equiv P(a=b|x,y)-P(a\not=b|x,y)$ must satisfy Uffink's inequality \cite{uffink}:

\be
(\langle A_0B_0\rangle+\langle A_1B_0\rangle)^2+(\langle A_0B_1\rangle-\langle A_1B_1\rangle)^2\leq 4.
\ee

\noindent Now, two-point correlators arising from distributions compatible with ML have been shown to be compatible with quantum mechanics \cite{mac_loc}. The almost quantum set satisfies ML, and so it is also restricted by Uffink's inequality. Later works invoke nonlocality distillation arguments to strengthen the restrictions implied by the above inequality \cite{recover, ahanj}. However, since the almost quantum set is closed under wirings, it is obvious that it has to concur as well with such limitations, which sometimes differ from their quantum counterparts \cite{recover}. At least in the two-inputs/two-outputs scenario, $\tilde{Q}$ (or just ML) thus seems to enforce strictly stronger constraints on nonlocality than the principle of IC.

The restrictions stemming from IC have also been studied in many-outcome nonlocality scenarios, where it was shown that there exist ML correlations which would allow two parties to violate IC \cite{ic_ml}. More specifically, in \cite{ic_ml} the authors contemplate a setup where Alice (Bob) has $d$ (2) inputs and $d$ outputs and define the following generalization of a Popescu-Rorlich box \cite{popescu}:

\be
\mbox{PR}_0(a,b|x,y)=\left\{\begin{array}{l}1/d,\mbox {if } x\cdot y=b-a\mbox{ mod }d,\\0, \mbox{ otherwise}\end{array}\right.
\ee

\noindent Then, they consider boxes of the form $\mbox{PR}(E)=E\mbox{PR}_0+(1-E)\id$, where $\id(a,b|x,y)=\frac{1}{d}$ for all $x,y,a,b$. While $\mbox{PR}(\frac{1}{\sqrt{2}})$ seems to satisfy ML for all $d$, there exists a $d$-dependent value $E^{(d)}_{IC}$ such that, for $E>E^{(d)}_{IC}$, the box $\mbox{PR}(E)$ would allow two parties to violate IC. As it turns out, for $d=4,5$, $E^{(d)}_{IC}<\frac{1}{\sqrt{2}}$ \cite{ic_ml}. 

We used the SDP characterization of $\tilde{Q}$ in order to estimate the critical value $E^{(d)}_{\tilde{Q}}$ beyond which $\mbox{PR}(E)$ ceases to be almost quantum. Due to the size of the problem, for $d=4$ we used the modeling language YALMIP \cite{yalmip} to generate an input for the the semidefinite programming solver SDPT3 \cite{sdpt3}, which we run in the NEOS server \cite{neos}. The results are shown in Table 1.

\begin{table}
\begin{center}\begin{tabular}{|c|c|c|c|}
\hline
  $d$ &$E_{IC}$ & $E_{\tilde{Q}}$ &$E_{Q}$\\
  \hline
  $2$ & $0.707$ & $0.707$  & $0.707$\\
  $3$ & $0.708$ & $0.667$ & $0.667$\\
  $4$ & $0.705$ & $0.653$ & NA\\
  $5$ & $0.700$ & NA & $\geq 0.647$\\
  \hline
\end{tabular}
\caption{Maximal amount of nonlocality tolerated by IC, the almost quantum set $\tilde{Q}$ and quantum mechanics. For $d=5$, the SDP constraints were too numerous to be stored in memory in a normal desktop.}
\end{center}\label{ic_table}
\end{table}

As the reader can appreciate, $E_{IC}\geq E_{\tilde{Q}}$ for $d=2,3,4$. The vast amount of memory resources required to carry out optimizations over $\tilde{Q}$ in the $5255$ scenario prevented us from obtaining the value of $E_{\tilde{Q}}$ for $d=5$. Note, however, that the sequence of values $E^{(2)}_{\tilde{Q}},E^{(3)}_{\tilde{Q}},E^{(4)}_{\tilde{Q}}$ is strictly decreasing. It is therefore reasonable to venture that $E^{(5)}_{\tilde{Q}}<E^{(4)}_{\tilde{Q}}<E^{(5)}_{IC}$.

Moving on to the multipartite scenario, in \cite{xiang} the maximum quantum value of Svetlichny's Bell inequality \cite{svetlichny} is recovered from the principle of IC. Once more, the mechanism to show incompatibility with IC rests on wirings and the violation of Uffink's inequality, and therefore all points exhibiting a supraquantum violation of Svetlichny's inequality must also violate ML and hence lie outside $\tilde{Q}$. In \cite{mafalda}, the authors use the same tool to prove that the majority of the extreme points of the non-signalling polytope in the two-input/two-output tripartite scenario violates IC. More concretely, they show that all non-deterministic extreme points violate IC, except for one class, called $\sharp 4$ in \cite{tripartite}, which is provably compatible with IC \cite{rodrigo}. Since $\sharp 4$ violates LO as shown in \cite{loc_orth} and $\tilde{Q}$ satisfies LO this box is therefore not almost quantum. When just the extreme points of the tripartite no-signalling set in the simplest nonlocality scenario are considered, $\tilde{Q}$ thus enforces provably stronger constraints than IC.

\begin{figure}
  \centering
  \includegraphics[width=15 cm]{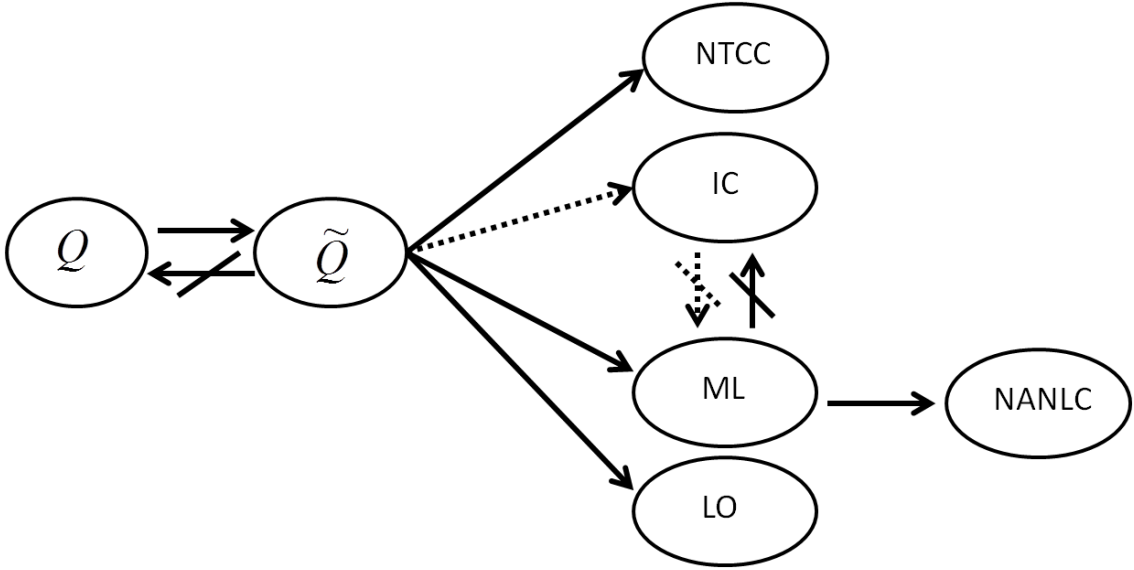}
  \caption{\textbf{Logical relations between physical principles, quantum nonlocality and the almost quantum set.} Solid arrows denote provable implications; dotted arrows represent implications for which so far there is only numerical evidence. That ML does not imply IC was proven in \cite{ic_ml}.}
  \label{esquema}
\end{figure}

\section{Conclusion}
\label{conclusion}
In this paper we have studied the set $\tilde{Q}$ of almost quantum correlations. This set appears naturally in a variety of seemingly unrelated fields, such as quantum information science, graph theory and quantum gravity. The ubiquity of the almost quantum set, together with the fact that $\tilde{Q}$ is closed under classical operations, seems to suggest that $\tilde{Q}$ emerges from a reasonable (yet unknown) physical theory. To support this conjecture, we have proven that almost quantum correlations satisfy a number of physical principles, originally conceived to single out the set of quantum correlations. The relations between these principles, quantum mechanics and $\tilde{Q}$ are summarized in Figure \ref{esquema}.

Note, however, that, despite our numerical evidence, we were not able to prove that $\tilde{Q}$ satisfies Information Causality \cite{marcin}. The original proof for the quantum case relies on the existence of a well-behaved entropic quantity, and so it does not carry through easily to the almost quantum case. Since the definition of sophisticated notions such as the von Neumann entropy requires the structure of a generalized probabilistic theory, finding a `natural' physical model whose non-locality is captured by $\tilde{Q}$ becomes imperative. Linking a physical theory with reasonable entropic inequalities to the almost quantum set would prove that $\tilde{Q}$ not only respects Information Causality, but also any future information-theoretic principle derived from, say, strong subaddititity, or the data processing inequality. In addition, an explicit `almost quantum theory' would also suggest where to look for genuinely non-quantum behavior and thus could be the first step towards an experimental refutation of quantum theory.

\section*{Acknowledgements}
We thank Stefano Pironio and Tam\'as V\'ertesi for providing numerical examples of non-quantum distributions in $\tilde{Q}$, and acknowledge interesting discussions with Joe Henson. This work is supported by the ERC AdG NLST and CoG QITBOX, the Chist-Era project DIQIP grant and the project ``Intrinsic randomness in the quantum
world" from the John Templeton Foundation.

\begin{appendix}

\section{Proof of Lemma \ref{quasi_nielsen}}
\label{non_triv_ap}

$P(\bar{a},b|\bar{x},[\bar{z}, \bar{y}])\in \tilde{Q}$ implies that there exists a pure quantum state $\ket{\phi}$ and projector operators $\{E^{\bar{a},\bar{x}},F^{b,[\bar{z},\bar{y}]}\}$ such that

\begin{enumerate}
\item $P(\bar{a},b|\bar{x},[\bar{z}, \bar{y}])=\bra{\phi}E^{\bar{a},\bar{x}}F^{b,[\bar{z},\bar{y}]}\ket{\phi}$.

\item $\sum_{\bar{a}}E^{\bar{a},\bar{x}}=\sum_bF^{b,[\bar{z},\bar{y}]}=\id$.

\item $E^{\bar{a},\bar{x}}F^{b,[\bar{z},\bar{y}]}\ket{\phi}=F^{b,[\bar{z},\bar{y}]}E^{\bar{a},\bar{x}}\ket{\phi}$, for all $\bar{x},\bar{y},\bar{z},\bar{a},b$.

\end{enumerate}

Now, consider the distribution 

\be
P'(\bar{a},\bar{c}|\bar{x},\bar{z})\equiv\tr\left\{(E^{\bar{a},\bar{x}}\otimes \id_2^{\otimes n+1})\Pi^{\bar{c},\bar{z}} \rho\right\},
\label{ns_cons}
\ee

\noindent where $\rho=\proj{\phi}\otimes \proj{\psi_{0}}^{\otimes n}\otimes\proj{\psi_{1}}$, and

\be
\Pi^{\bar{c},\bar{z}}=U^\dagger_{\bar{z}}(\id\otimes \proj{\psi_{\bar{c}}}\otimes \id_2)U_{\bar{z}},
\ee

\noindent with

\be
U_{\bar{z}}=\sum_{\bar{y},b}F^{b,[\bar{z},\bar{y}]}\otimes \proj{\bar{y}}\otimes \sigma^b.
\ee

\noindent Here $\ket{\bar{y}}$ denotes the expression of the bit string $\bar{y}$ in the computational basis, i.e., $\ket{\bar{y}}\equiv\bigotimes_{i=1}^n\ket{y_i}$. Analogously, $\ket{\psi_{\bar{c}}}$ represents the expression of $\bar{c}$ in the Hadamard basis; $\ket{\psi_{\bar{c}}}\equiv\bigotimes_{i=1}^n\frac{1}{\sqrt{2}}(\ket{0}+(-1)^{c_i}\ket{1})$. $\sigma$ denotes the Pauli matrix $\ket{0}\bra{1}+\ket{1}\bra{0}$.

One can verify that $\{U_{\bar{z}}\}$ are unitary operators. It follows that, for any $\bar{z}\in \{0,1\}^m$, $\{\Pi^{\bar{c},\bar{z}}\}_{\bar{c}}$ is a complete set of projector operators. These two features imply that $P'(\bar{a},\bar{c}|\bar{x},\bar{z})$ is normalized. $P'(\bar{a},\bar{c}|\bar{x},\bar{z})$ also satisfies the no-signalling conditions:

\be
\sum_{\bar{a}}P'(\bar{a},\bar{c}|\bar{x},\bar{z})=P'(\bar{c}|\bar{z}),\qquad\sum_{\bar{c}}P'(\bar{a},\bar{c}|\bar{x},\bar{z})=P'(\bar{a}|\bar{x}).
\ee

\noindent To prove that $P'(\bar{a},\bar{c}|\bar{x},\bar{z})$ is a non-signalling distribution, it remains to be seen that $P'(\bar{a},\bar{c}|\bar{x},\bar{z})\geq 0$ for all inputs and outputs $\bar{a}, \bar{c}, \bar{x}, \bar{z}$. Expanding eq. (\ref{ns_cons}) we have that

\begin{align}
P'(\bar{a},\bar{c}|\bar{x},\bar{z})&\equiv\tr\left\{(E^{\bar{a},\bar{x}}\otimes \id_2^{\otimes n+1})\Pi^{\bar{c},\bar{z}} \rho\right\}\\
&= \tr\left\{(E^{\bar{a},\bar{x}}\otimes \id_2^{\otimes n+1}) U_{\bar{z}}^\dagger \bigg( \mathbb{I}\otimes \proj{\psi_{\bar{c}}}\otimes \mathbb{I}_2\bigg) U_{\bar{z}}\rho\right\}\\
&= \tr\left\{\sum_{\bar{y},b}\rho(E^{\bar{a},\bar{x}}\otimes \id_2^{\otimes n+1})\bigg(F^{b,[\bar{z},\bar{y}]}\otimes \proj{\bar{y}}\otimes \sigma^{b} \bigg)\bigg( \mathbb{I}\otimes \proj{\psi_{\bar{c}}}\otimes \mathbb{I}_2\bigg) U_{\bar{z}}\right\}.
\end{align}
Notice that, due to point 3, the operators $E^{\bar{a},\bar{x}}$ and $F^{b,[\bar{z},\bar{y}]}$ can be interchanged, since they act on the $\bra{\phi}$ subspace of the density operator $\rho$. Thus we have 

\be
P'(\bar{a},\bar{c}|\bar{x},\bar{z})=\tr(\rho U^\dagger_{\bar{z}}(E^{\bar{a},\bar{x}}\otimes \proj{\psi_{\bar{c}}}\otimes \id_2)U_{\bar{z}})\geq 0.
\ee

Finally, we must show that eq. (\ref{robustness}) holds. Note that

\be
\Pi^{\bar{c},\bar{z}}=\frac{1}{2^{n}}\sum_{\bar{y},b,\bar{y}',b'}(-1)^{\bar{c}\cdot(\bar{y}+\bar{y}')}F^{b,[\bar{z},\bar{y}]}F^{b',[\bar{z},\bar{y}']}\otimes\ket{\bar{y}}\bra{\bar{y}'}\otimes \sigma^{b+b'},
\ee
\noindent where the factor $ \frac{1}{2^n}$ arises from the overlap between the Hadamard and computational basis. We thus have that

\begin{eqnarray}
\sum_{\bar{a}}P'(\bar{a},\bar{x}|\bar{x},\bar{a})&=&\tr\left\{\rho\sum_{\bar{a}} (E^{\bar{a},\bar{x}}\otimes \id_2^{\otimes n+1})\Pi^{\bar{x},\bar{a}}\right\}=\nonumber\\
&=&\frac{1}{2^{2n}}\sum_{\bar{a},\bar{y},\bar{y}',b,b'}\bra{\phi}E^{\bar{a},\bar{x}} F^{b,[\bar{a},\bar{y}]}F^{b',[\bar{a},\bar{y}']}\ket{\phi}(-1)^{b+b'+\bar{x}\cdot(\bar{y}+\bar{y}')}.
\label{inter}
\end{eqnarray}

\noindent From property (3) and the fact that $(E^{\bar{a},\bar{x}})^2=E^{\bar{a},\bar{x}}$, we have that 

\be
\bra{\phi}(E^{\bar{a},\bar{x}} F^{b,[\bar{a},\bar{y}]})\cdot F^{b',[\bar{a},\bar{y}']}\ket{\phi}=\bra{\phi} (F^{b,[\bar{a},\bar{y}]}E^{\bar{a},\bar{x}})\cdot (E^{\bar{a},\bar{x}} F^{b',[\bar{a},\bar{y}']})\ket{\phi}.
\ee

\noindent Defining

\be
\ket{\bar{a},\bar{x}}\equiv E^{\bar{a},\bar{x}} \ket{\phi},\ket{\bar{a},b,\bar{x},[\bar{z},\bar{y}]}\equiv E^{\bar{a},\bar{x}} F^{b,[\bar{z},\bar{y}]}\ket{\phi},
\ee

\noindent we can therefore re-express eq. (\ref{inter}) as

\be
\sum_{\bar{a}}P'(\bar{a},\bar{x}|\bar{x},\bar{a})=\sum_{\bar{a}}\braket{V^{\bar{a},\bar{x}}}{V^{\bar{a},\bar{x}}},
\label{inter2}
\ee

\noindent with

\be
\ket{V^{\bar{a},\bar{x}}}\equiv\frac{1}{2^n}\sum_{\bar{y},b}(-1)^{b+\bar{x}\cdot\bar{y}}\ket{\bar{a},b,\bar{x},[\bar{a},\bar{y}]}.
\label{V}
\ee

\noindent Now, notice that $\braket{\bar{a},\bar{x}}{\bar{a},\bar{x}}=\bra{\phi}E^{\bar{a},\bar{x}}\ket{\phi}=P(\bar{a}|\bar{x})$. Likewise,

\be
\braket{\bar{a},b,\bar{x},[\bar{z},\bar{y}]}{\bar{a},b,\bar{x},[\bar{z},\bar{y}]}=\braket{\bar{a},b,\bar{x},[\bar{z},\bar{y}]}{\bar{a},\bar{x}}=\bra{\phi}F^{b,[\bar{z},\bar{y}]} E^{\bar{a},\bar{x}}\ket{\phi}=P(\bar{a},b|\bar{x},[\bar{z},\bar{y}]).
\ee

\noindent These two relations imply that

\be
\ket{\bar{a},b,\bar{x},[\bar{z},\bar{y}]}=P(b|\bar{x},\bar{a},[\bar{z},\bar{y}])\ket{\bar{a},\bar{x}}+\ket{\perp},
\label{perpe}
\ee

\noindent for some vector $\ket{\perp}$ with $\braket{\perp}{\bar{a},\bar{x}}=0$.

\noindent Substituting (\ref{perpe}) in (\ref{V}), we find that

\be
\ket{V^{\bar{a},\bar{x}}}=\left\{2\sum_{\bar{y}}\frac{P(b=\bar{x}\cdot\bar{y}|\bar{x},\bar{a},[\bar{a},\bar{y}])}{2^n}-1\right\}\ket{\bar{a},\bar{x}}+\ket{\perp}'.
\ee

\noindent Neglecting the contribution of $\ket{\perp}'$ to the norm of $\ket{V^{\bar{a},\bar{x}}}$, and invoking once more the identity $\braket{\bar{a},\bar{x}}{\bar{a},\bar{x}}=P(\bar{a}|\bar{x})$, we have that the right-hand side of eq. (\ref{inter2}) is lowerbounded by eq. (\ref{robustness}).

\end{appendix}

\end{document}